\def\figcaption{\def\@captype{figure}\caption}
\newtheorem{theorem}{\bf Theorem}[section]
\newtheorem{lemma}{\bf Lemma}[section]
\newtheorem{proposition}{\bf Proposition}[section]
\newtheorem{corollary}[theorem]{\bf Corollary}
\newtheorem{example}{\bf Example}[section]
\newtheorem{remark}{\bf Remark}[section]
\newtheorem{definition}{\bf Definition}[section]
\title{Generator of an abstract quantum walk}
\author{
Etsuo Segawa
\thanks{Graduate School of Information Sciences, Tohoku University,
Aoba, Sendai 980-8579, Japan
	},
Akito Suzuki
\thanks{Division of Mathematics and Physics, 
Faculty of Engineering, Shinshu University, Wakasato, Nagano 380-8553, Japan
	}
}
\begin{document}

\maketitle

\begin{abstract}
We give an explicit formula of the generator of 
an abstract Szegedy  evolution operator 
in terms of the discriminant operator of the evolution.
We also characterize the asymptotic behavior of a quantum walker 
through the spectral property of the discriminant operator
by using the discrete analog of the RAGE theorem.
\end{abstract}
\section{Introduction}
Quantum walks (QWs) are one of the interesting topics which have overlaps to various kinds of study fields
(see \cite{Am03, Sh, VA15} and their references). 
While there are several opinions of the priority of QW, 
primitive forms of discrete-time QWs can been seen, for example, 
Feynman and Hibbs~\cite{FH}, Aharonov et al~\cite{Ah}, and Watrous~\cite{Wa}. 
Gudder~\cite{Gud}, Meyer~\cite{Mey}, and Ambainis et al~\cite{Am01} 
introduced the current notion of discrete-time QWs, 
independently. 
The Szegedy walk,
whose original form was introduced in \cite{Sz},
is one of well-investigated discrete-time QWs on graphs.
This includes Grover walk \cite{Gr, Wa} 
and has been intensively studied from various perspectives 
(see, for example, \cite{HKSS13, HKSS14, MNRS,Se}).
Recently, Higuchi et al \cite{HKSS14} introduced
an extended version of the Szegedy walk, the twisted Szegedy walk, 
and proved
a spectral mapping theorem for the new walk on a finite graph.
Using the theorem,
they studied
the spectral and asymptotic properties of the Grover walks 
on crystal lattices.
In our previous paper \cite{SS15},
we studied
an abstract evolution of the form
\begin{equation}
\label{eq_1.1} 
U = S(2d_A^* d_A -1).
\end{equation}
Here $d_A$ is a coisometry from a Hilbert space $\mathcal{H}$
to another Hilbert space $\mathcal{K}$ 
and $S$ is a unitary involution on $\mathcal{H}$:
\[ d_A d_A^* = I_\mathcal{K}, 
	\quad S = S^* = S^{-1}, \]
where $I_\mathcal{K}$ is the identity operator on $\mathcal{K}$.
Let $T = d_A S d_A^*$, called the discriminant of $U$,  
and $\varphi(x) = (x + x^{-1})/2$.
Then the following spectral mapping theorem was proved:
\begin{align*} 
& \sigma(U) 
	= \varphi^{-1}(\sigma(T)) 
		\cup \{+1\}^{M_+} \cup \{-1\}^{M_-}, \\
& \sigma_{\rm p}(U) 
	= \varphi^{-1}(\sigma_{\rm p}(T)) 
		\cup \{+1\}^{M_+} \cup \{-1\}^{M_-},
\end{align*}
where $M_\pm = {\rm dim} \mathcal{D}_\pm^\perp$
and $\mathcal{D}_\pm^\perp = \ker(d_A) \cap \ker(S\pm 1)$.

Let $G = (V, D)$ be a symmetric directed graph,
{\it i.e.}, an arc $e\in D$ if and only if the inverse arc $\bar{e}\in D$.
The evolution $U^{(w,\theta)}$ of the twisted Szegedy walk
on $G$ is of the form
$U^{(w,\theta)} 
= S^{(\theta)} (2 d_A^{(w)*} d_A^{(w)} -1)$,
where 
$d_A^{(w)}:\ell^2(D) \to \ell^2(V)$ 
is a boundary operator 
defined from a weight function $w: D \to \mathbb{C}$
and $S^{(\theta)}$ on $\ell^2(D)$ a (twisted) shift operator
defined from a 1-form $\theta:D \to \mathbb{R}$.
Because 
$U$ becomes $U^{(w,\theta)}$ 
with $d_A = d_A^{(w)}$ and $S = S^{(\theta)}$,
the twisted Szegedy walk on any symmetric directed graph 
is an example of  the abstract Szegedy walks.
In particular, the result of \cite{HKSS14} was extended 
to infinite graphs other than crystal lattices.
The evolution of the Grover walk on $G$ is given by  $U^{(w,\theta)}$
with $w(e) = 1/\sqrt{{\rm deg}(o(e))}$ and $\theta(e) = 0$ ($e \in D$).
In this case, the discriminant $T$ is unitarily equivalent to the transition operator $P_G$
of the symmetric random walk on $G$.
This allows us to determine the the spectrum
of $U^{(w,\theta)}$ from the spectrum of $P_G$
and the subspaces $\mathcal{D}_\pm^\perp$. 

In this paper, we continue the study of the abstract evolution 
$U$ defined by \eqref{eq_1.1}. 
In the case of a continuous-time QW, 
the time evolution is defined as $U(t) = e^{it H}$,
where $H$ is the (negative) Hamiltonian 
(see \cite{CFG} and \cite{Am03} for details).
By the Wiener theorem \cite{Wi} and the RAGE theorem \cite{Ru, AG, En} (see also \cite{RS3}),
the asymptotic behavior of a quantum walker 
is deduced from the spectral properties of $H$.
Motivated by the continuous case,
we give an explicit formula of
the generator $H$ such that 
$H$ is self-adjoint and $U^n = e^{in H}$.
For the evolution $U$ defined by \eqref{eq_1.1}, 
we prove the following.
\begin{itemize}
\item[(1)] The operators
\begin{align*}
& d_+ = \frac{1}{\sqrt{2(1-T^2)} }(d_A -e^{-i\vartheta(T)} d_A S), \\
& d_- = \frac{1}{\sqrt{2(1-T^2)} }(e^{-i\vartheta(T)} d_A - d_A S)
\end{align*} 
can be extended to unitary operators
from ${\rm Ran}(d_\pm^*d_\pm)$ to $\ker(T^2-1)^\perp$.
\item[(2)] Let $\vartheta:[-1,1] \to [0,\pi]$ be the function defined by
$\vartheta(\lambda) = \arccos \lambda$. Then,
the generator $H$ of $U$ is expressed as
\begin{equation*} 
H = \vartheta(d_+^* T d_+) \oplus
	(2\pi - \vartheta(d_-^* T d_-)) \oplus 0 \oplus \pi
\end{equation*} 
on $\mathcal{H} 
	={\rm Ran}(d_+^*) \oplus {\rm Ran}(d_-^*) \oplus \ker(U-1)
	\oplus \ker(U+1)$.
Moreover, 
\[ \ker(U \mp 1) 
	= d_A^* \ker(T \mp 1) \oplus \mathcal{D}_\pm^\perp. \]
\end{itemize}

Let $\mathcal{H}_{\rm p}(A)$ denote the direct sum 
of all eigenspaces of a self-adjoint operator $A$
and $\mathcal{H}_{\sharp}(A)$ ($\sharp = {\rm c, ac, sc}$)
the subspaces of continuity, 
absolute continuity, and singular continuity, respectively.
As a direct consequence of (1) and (2),
the spectral property of the generator $H$ (or the evolution $U$)
is determined 
by the discriminant of $T$ and the subspaces $\mathcal{D}_\pm^\perp$:
\begin{itemize}
\item[(3)] Let $\mathcal{H}_{\rm p}^T 
	:= \mathcal{H}_{\rm p}(T )\cap \ker(T^2-1)^\perp$.
Then,
\begin{align*}
& \mathcal{H}_{\rm p}(H) 
	= d_+^* \mathcal{H}_{\rm p}^T \oplus d_-^* \mathcal{H}_{\rm p}^T 
		\oplus \ker (U^2-1), \\
& \mathcal{H}_{\sharp}(H) 
	=  d_+^* \mathcal{H}_{\rm \sharp}(T)  
		\oplus d_-^* \mathcal{H}_{\rm \sharp}(T).
\end{align*}
\end{itemize}

In what follows,
we consider the long-time asymptotic behavior.
We begin with a general setting.
The setting allows us to introduce the notion of unitary equivalence among QWs
and unify several concrete examples of QWs
such as the Gudder-type QW and the Ambainis-type QW defined in \cite{HKSS13}.
Recently, Ohno proved that
any space-homogeneous QWs on the line \cite{Gud, Mey, Am01} are unitarily equivalent
to abstract Szegedy walks. 
Given a unitary operator $U$ on a Hilbert space $\mathcal{H}$
and a direct sum decomposition $\mathcal{H} = \bigoplus_{x \in V} \mathcal{H}_x$, 
we can naturally introduce 
a directed graph $G_U$ with vertices $V$ and a probability distribution on $V$:
\[ \nu_n(x) = \|P_x U^n \Psi_0\|^2 \quad (x \in V), \]
where $P_x$ is the orthogonal projection onto $\mathcal{H}_x$
and $\Psi_0 \in \mathcal{H}$ is a normalized vector.
We interpret $\nu_n(x)$ as the finding probability of a quantum walker on $G_U$
and $\Psi_0$ as the initial state of the quantum walker. 
In this sense, we say that $U$ is an evolution of QW and 
write $(U, \{\mathcal{H}_x \}_{x \in V}) \in \mathcal{F}_{\rm QW}$.
In the case of the twisted Szegedy evolution $U^{(w,\theta)}$ on $G = (V,D)$,
there is a natural decomposition $\ell^2(D) = \oplus_{x \in V} \mathcal{H}_x$ 
such that $(U^{(w,\theta)}, \{\mathcal{H}_x\}_{x \in V}) \in \mathcal{F}_{\rm QW}$.
Assuming ${\rm dim}\mathcal{H}_x < \infty$ ($x \in V$),
we obtain the discrete analog of the RAGE theorem (see \cite{RS3}):
\begin{itemize}
\item[(4)] $\Psi_0 \in \mathcal{H}_{\rm c}(H)$
if and only if 
$\lim_{N \to \infty} \sum_{n=0}^{N-1} \nu_n^{\Psi_0}(R)/N = 0$
for all finite set $R \subset V$.
\item[(5)] $\Psi_0 \in \mathcal{H}_{\rm p}(H)$
if and only if
$\lim_{m \to \infty} \sup_{n} \nu_n^{\Psi_0}(R_m^{\rm c}) = 0$
for any increasing sequence $\{R_m\}_m$ of finite sets 
such that $\bigcup_m R_m = V$.
\end{itemize}

\noindent
In \cite[Definition 6]{HKSS14}, the authors say that
localization occurs if
\[ \limsup_{n \to \infty} \nu_n^{\Psi_0}(x) > 0
	\quad \mbox{with some $x \in V$.} \]
Let $P_{\rm p}(H)$ be the orthogonal projection onto $\mathcal{H}_{\rm p}(H)$.
Assuming that $\Psi_0 \in \mathcal{H}_{\rm sc}(H)^\perp$,
we observe form (5) the following assertion:
\begin{itemize}
\item[(6)] Localization occurs if and only if
$\Psi_0$ overlaps with $\mathcal{H}_{\rm p}(H)$, {\it i.e.},
$P_{\rm p}(H)\Psi_0$ $\not=0$.
\end{itemize}
For the abstract Szegedy walk,
we know the following from (3) and (6). 
\begin{itemize}
\item[(7)] Localization occurs for some initial state $\Psi_0$
	if and only if $\sigma_{\rm p}(T) \not=\emptyset$ 
		or $\mathcal{D}^\perp \not=\emptyset$.
\item[(8)] If $T$ has a complete set of eigenstates,
	localization occurs for any initial state $\Psi_0$.
\end{itemize}

The remainder of this paper is organized as follows.
Section \ref{sec.1} is devoted to 
the study of the abstract QW.
In Section 2.1, we give the axiom of an  abstract QW
and some concrete examples.
In Section 2.2, we discuss the relation between the generator of an abstract evolution 
and the long-time asymptotic behavior of a quantum walker. 
In particular, we prove (4), (5) and (6).
Section \ref{subsectionevolution} contains 
a brier review of the abstract Szegedy walk.
We summarize the results from \cite{SS15} without proofs.
In Section \ref{sec1.3},
we state the main results of this paper
and prove (7) and (8).
Section \ref{sectiongenerator} is devoted to the derivation of the generator of  
the abstract evolution.
In Subsection \ref{sec.3.1}, 
we present the rigorous definitions of the operators $d_\pm$
and prove (1).
In Subsection  \ref{sec.3.2},
we prove (2) and (3).
In the appendix, we present the proofs of 
the discrete analog of the RAGE theorem
and a relation between the initial state and localization.

\section{
Abstract quantum walks
}
\label{sec.1}
In this section, we first propose QW defied by a unitary operator $U$,
where $U$ is not assumed to be of the form  $U=S(2d_A^*d_A -1)$
but is assumed to act on a Hilbert space
written as a direct sum of Hilbert spaces $\{ \mathcal{H}_v\}_{v \in V}$.
Then, as shown in the following subsection, 
$U$ naturally defines a directed graph $G_U = (V, D)$ and
the probability  of finding a quantum walker thereon.
In addition, we see that 
the dynamics of a quantum walker is governed
by the generator of the evolution $U$. 
\subsection{Axiom of abstract quantum walks}
\label{sec1.1}
Let $V$ be a countable set,
$\{ \mathcal{H}_v \}_{v \in V}$ a family of separable Hilbert spaces
(possibly ${\rm dim}\mathcal{H}_v=\infty$)
and $U$ a unitary on $\mathcal{H} = \bigoplus_{v \in V} \mathcal{H}_v$.
We say that $(U, \{\mathcal{H}_v\}_{v \in V})$
is an evolution of QW
and write $(U, \{\mathcal{H}_v\}_{v \in V}) \in \mathscr{F}_{\rm QW}$.
If there is no danger of confusion,
we simply say that $U$ is an evolution of QW
and write $U \in \mathscr{F}_{\rm QW}$.
We use $P_v$ to denote the projection from $\mathcal{H}$ onto $\mathcal{H}_v$
and define operators $U_{uv}:\mathcal{H}_v \to \mathcal{H}_u$ ($u,v \in V$)
by
\[ U_{uv} = P_u U P_v. \]
First, we introduce a graph associated with $U \in \mathscr{F}_{\rm QW}$.
We use $o(e)$ and $t(e)$ to denote the origin and terminal, respectively, 
of a directed edge $e$ of a graph.
\begin{definition}
{\rm
The graph $G_U = (V_U, D_U)$ associated with an evolution
$(U, \{\mathcal{H}_v\}_{v \in V}) \in \mathscr{F}_{\rm QW}$
is a directed graph defined as follows:
\begin{itemize}
\item[(1)] The set $V_U$ of  vertices of $G_U$ is given by $V_U = V$.
\item[(2)]  If $U_{uv}\neq 0$, there exists an arc $e\in D_U$ from 
$v$ to $u$.
\end{itemize}
}
\end{definition}
Hereafter, we simply write $G_U = (V, D)$
when no confusion can arise.
It is possible that 
depending on the choice of the separation $\{\mathcal{H}_v\}_{v\in V}$, 
there is no inverse {arc} of an {arc} $e \in D$,
because it is not necessary that $U_{vu} \not=0$ even if $U_{uv} \not=0$. 
\begin{example}
{\rm
Let us consider the Hilbert space $\mathcal{H} = \mathbb{C}^3$.
Let $\{\delta_1, \delta_2, \delta_3\}$ be the standard basis of $\mathcal{H}$
and  
\begin{equation*}
U = \begin{pmatrix}
1/\sqrt{2} &  1/\sqrt{2} & 0 \\
0 & 0 & 1 \\
- 1/\sqrt{2} &  1/\sqrt{2} & 0
\end{pmatrix}
\end{equation*}
a unitary matrix on $\mathcal{H}$.
\begin{itemize}
\item[(i)] Let $V =\{ a, b\}$.
We consider the separation $\{\mathcal{H}_a, \mathcal{H}_b\}$ of $\mathcal{H}$,
where $\mathcal{H}_a ={\rm Span}\{ \delta_1 \}$ and
$\mathcal{H}_b ={\rm Span}\{\delta_2, \delta_3\}$. 
By this separation, $U$ is decomposed as
\[
U = \left(\begin{array}{c|cc}
1/\sqrt{2} &  1/\sqrt{2} & 0 \\ \hline 
0 & 0 & 1 \\
- 1/\sqrt{2} &  1/\sqrt{2} & 0
\end{array}
\right).
\]
Hence, $G_U$ has an arc from $a$ to $b$
and its inverse arc. 
$G_U$ has loops at $a$ and $b$.
\item[(ii)]  Let $V = \{a, b,c\}$ and consider 
the separation  $\{\mathcal{H}_v\}_{v \in V}$, 
where
$\mathcal{H}_a = {\rm Span}\{\delta_1\}$,
$\mathcal{H}_b = {\rm Span}\{\delta_2\}$,
and $\mathcal{H}_c = {\rm Span}\{\delta_3\}$.
$U$ is decomposed as
\[
U = \left(\begin{array}{c|c|c}
1/\sqrt{2} &  1/\sqrt{2} & 0 \\ \hline 
0 & 0 & 1 \\ \hline
- 1/\sqrt{2} &  1/\sqrt{2} & 0
\end{array}
\right).
\]
We observe that $U_{ba} = U_{ca}= 0$, 
whereas $U_{ab}\not=0$ and $U_{ac}\not=0$.
Hence, $G_U$ has no inverse arcs of an arc from $b$ to $a$ 
and an arc from $c$ to $a$.
$G_U$ has an arc from $b$ to $c$, its inverse arc, and a loop only at $a$. 
\end{itemize}
}
\end{example}
In the following, we introduce an abstract QW on $G_U$.
\vspace{3mm}

\noindent
{\bf Axiom.} QW with an evolution $(U, \{\mathcal{H}_v\}_{v \in V})$
$\in \mathscr{F}_{\rm QW}$
is defined as follows:
{
\begin{itemize}
\item[(1)]
 	The state of a quantum walker
	at time $n \in \mathbb{N}$ with the initial state $\Psi_0 \in \mathcal{H}$ 
	($\|\Psi_0\|=1$) is given by $\Psi_n = U^n \Psi_0$. 
\item[(2)] The probability $\nu_n(x)$ of  finding the quantum walker 
	at vertex $x \in V$ at time $n \in \mathbb{N}$ is given by
	$\nu_n(x) = \| P_x  \Psi_n \|^2$.
\end{itemize}
}
\begin{example}
\label{ex_homogeneous}
{\rm
The evolution of a typical QW on $\mathbb{Z}$ is of the form
\[ U = \sum_{x \in \mathbb{Z}} 
	\left( |x+1\rangle \langle x| \otimes Q
		+ |x-1\rangle \langle x| \otimes P \right), \]
which converges in the strong operator topology.
Here, $P, Q \in M_2(\mathbb{C})$
and the Hilbert space of states is given by 
$\mathcal{H} = \ell^2(\mathbb{Z}) \otimes \mathbb{C}^2$.
Noting that $\mathcal{H} = \oplus_{x \in \mathbb{Z}} \mathcal{H}_x$
with 
$\mathcal{H}_x = {\rm Ran}( |x \rangle \langle x | \otimes I_{\mathbb{C}^2} )
	\simeq \mathbb{C}^2$,
we see that 
\begin{equation}
\label{Uyx}
U_{yx} 
	= \begin{cases} 
	|y \rangle \langle x| \otimes P, & y= x-1,\\
	|y \rangle \langle x| \otimes Q, & y= x+1,\\
	0, & \mbox{otherwise}.
	\end{cases}
\end{equation}
We observe from Proposition \ref{lemmacriteria} below that
$U$ is unitary if and only if $P$ and $Q$ satisfy
\begin{equation}
\label{eqPQ}
PP^* + QQ^* = P^*P + Q^*Q = 1,
	\quad PQ^* = Q^*P = 0. 
\end{equation}
For example, if $P = \begin{pmatrix} a & b \\ 0 & 0 \end{pmatrix}$,
$Q = \begin{pmatrix} 0 & 0 \\ c & d \end{pmatrix}$
and $P+Q$ is unitary,
$P$ and $Q$ satisfy \eqref{eqPQ}.
Hence,
$(U, \{\mathcal{H}_x\}_{x \in \mathbb{Z}}) \in \mathscr{F}_{\rm QW}$
and the graph $G_U$ associated with $U$ is 
the symmetric directed graph of $\mathbb{Z}$.
Because $P_x = |x \rangle \langle x | \otimes I_{\mathbb{C}^2}$,
we know that the probability of finding a quantum walker at vertex $x \in \mathbb{Z}$ 
at time $n \in \mathbb{N}$
with an initial state $\Psi_0 \in \mathcal{H}$ is 
$\nu_n(x) = \|\Psi_n(x)\|_{\mathbb{C}^2}^2$.
For a deeper discussion of this QW, we refer the reader to \cite{Am01, Am03}.
}
\end{example}

\begin{proposition}
\label{lemmacriteria}
{\rm 
Let $W$ be a bounded operator on $\mathcal{H} = \oplus_{v \in V}\mathcal{H}_v$
and $W_{uv} = P_u W P_v$ ($u,v \in V$).
The following are equivalent:
\begin{itemize}
\item[(i)] $W$ is unitary.
\item[(ii)] $\sum_{x \in V} W_{ux} (W^*)_{xv} 
	= \sum_{x \in V} (W^*)_{ux} W_{xv} = \delta_{uv} P_v$ 
		for all $u,v \in V$.
\end{itemize}
}
\end{proposition}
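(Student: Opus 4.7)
The plan is to exploit the fact that the decomposition $\mathcal{H} = \bigoplus_{v \in V} \mathcal{H}_v$ yields a resolution of the identity $\sum_{v \in V} P_v = I$ in the strong operator topology. For any bounded operator $W$, inserting this identity between $W$ and $W^*$ (or between $W^*$ and $W$) will express the off-diagonal block $P_u WW^* P_v$ as the series $\sum_{x \in V} W_{ux}(W^*)_{xv}$. More precisely, for a fixed $\eta \in \mathcal{H}$ I would apply the partial sums $\sum_{x \in F} P_x \to I$ strongly (as $F$ runs through finite subsets of $V$), compose on the left with the bounded operators $P_u W$ and on the right with $W^* P_v$, and use strong continuity of left multiplication by bounded operators to conclude that $\sum_x W_{ux}(W^*)_{xv}\,\eta \to P_u W W^* P_v \, \eta$. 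The symmetric computation gives $\sum_x (W^*)_{ux} W_{xv}\,\eta \to P_u W^* W P_v \, \eta$.

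For the direction (i) $\Rightarrow$ (ii), once the two identities above are established, the unitarity assumption $WW^* = W^*W = I$ immediately yields $P_u W W^* P_v = P_u P_v = \delta_{uv} P_v$ and similarly for the other product, which is exactly (ii).

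For the converse (ii) $\Rightarrow$ (i), the same series identities combined with the hypothesis in (ii) give $P_u W W^* P_v = \delta_{uv} P_v = P_u P_v$ and $P_u W^* W P_v = \delta_{uv} P_v = P_u P_v$ for all $u,v \in V$. To upgrade block-wise equality to the global identities $WW^* = I$ and $W^* W = I$, I would test matrix elements on the dense set of vectors of the form $\xi \in \mathcal{H}_u$, $\eta \in \mathcal{H}_v$: one computes $\langle \xi, W W^* \eta \rangle = \langle \xi, P_u W W^* P_v \eta \rangle = \langle \xi, P_u P_v \eta \rangle = \langle \xi, \eta \rangle$, since $\xi = P_u \xi$ and $\eta = P_v \eta$. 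Extending by sesquilinearity to finite linear combinations and by density/continuity to all of $\mathcal{H}$ yields $W W^* = I$, and the same argument gives $W^* W = I$, so $W$ is unitary.

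The only subtle point, and the one place I would be most careful, is the justification of the strong convergence of the series $\sum_x W_{ux}(W^*)_{xv}$, since $V$ may be infinite and no a priori norm-summability is assumed. Everything else is a routine bookkeeping exercise with orthogonal projections. I expect that invoking the strong continuity of left (and right) multiplication by a bounded operator on a strongly convergent net of partial sums $\sum_{x \in F} P_x \to I$ dispatches the issue cleanly, so the proof should be short.
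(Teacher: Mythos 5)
Your proposal is correct and follows essentially the same route as the paper: both insert the strongly convergent resolution of the identity $\sum_{x\in V}P_x = I$ between the factors to identify $\sum_x W_{ux}(W^*)_{xv}$ with $P_uWW^*P_v$ (and likewise for $W^*W$), and then observe that (ii) is exactly the block-wise statement of $WW^*=W^*W=I$. The only difference is that you spell out the strong-convergence justification and the upgrade from block-wise to global equality, which the paper leaves implicit.
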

\begin{proof}
The operator equality
$I = \sum_{v \in V} P_v$
and the equalities
\begin{align*}
(WW^*)_{uv} 
= \sum_{v \in V}  W_{ux} (W^*)_{xv}
\quad \mbox{and} \quad
(W^*W)_{uv} 
= \sum_{v \in V}  (W^*)_{ux} W_{xv}
\end{align*}
all hold  in the strong convergence sense.
Hence, (ii) is equivalent to $WW^* = W^*W = I_{\mathcal{H}}$,
which proves the proposition.
\end{proof}

\begin{definition}
\label{defUV}
{\rm
$(U_1, \{\mathcal{H}^{(1)}_{v_1}\}_{v \in V_1}) \in \mathscr{F}_{\rm QW}$ 
and $(U_2, \mathcal{H}^{(2)}_{v_2}\}_{v \in V_2}) \in \mathscr{F}_{\rm QW}$
are unitarily equivalent,
written 
$(U_1, \{\mathcal{H}^{(1)}_{v_1}\}_{v \in V_1}) 
	\simeq (U_2, \mathcal{H}^{(2)}_{v_2}\}_{v \in V_2})$,
if there exist a unitary 
$\mathscr{U}:\bigoplus_{v_1 \in V_1} \mathcal{H}_{v_1}
\to \bigoplus_{v_2 \in V_2} \mathcal{H}_{v_2}$ 
and a bijection $\phi:V_1 \to V_2$
such that $\mathscr{U} \mathcal{H}^{(1)}_{v_1} = \mathcal{H}^{(2)}_{\phi(v_1)}$
and $\mathscr{U} U_1 \mathscr{U}^{-1} = U_2$.
}
\end{definition}
Let $(U_1, \{\mathcal{H}^{(1)}_{v_1}\}_{v_1 \in V_1}) \in \mathscr{F}_{\rm QW}$ 
and $(U_2, \{\mathcal{H}^{(2)}_{v_2}\}_{v_2 \in V_2}) \in \mathscr{F}_{\rm QW}$
be unitarily equivalent.
The state 
$U_1^n\Psi_0^{(1)} \in \mathcal{H}_1:=\bigoplus_{v \in V_1} \mathcal{H}^{(1)}_v$ 
of a quantum walker at time $n \in \mathbb{N}$
 is identified with 
$U_2^n\Psi_0^{(2)} = \mathscr{U}  (U_1^n\Psi_0^{(1)} )
	\in \mathcal{H}_2:=\bigoplus_{v \in V_2} \mathcal{H}^{(2)}_v$,
where $\Psi_0^{(2)} = \mathscr{U} \Psi_0^{(1)}$. 
Since 
$\mathscr{U} \mathcal{H}^{(1)}_{v_1} = \mathcal{H}^{(2)}_{\phi(v_1)}$,
we have $P_{\phi(v_1)} = \mathscr{U} P_{v_1} \mathscr{U}^{-1}$.
Hence,
the probability $\nu^{(1)}_n(x_1) := \|P_{x_1} \Psi_n^{(1)}\|^2$ 
of finding a quantum walker at vertex $x_1 \in V_1$ and at time $n \in \mathbb{N}$ 
is equal to $\nu_n^{(2)}(\phi(x_1)) :=\|P_{\phi(x_1)} \Psi_n^{(2)} \|^2$.
We also know that the bijection $\phi:V_1 \to V_2$ is an isomorphism
between the associated graphs $G_{U_1}$ and $G_{U_2}$.
\begin{proposition}
\label{prop_iso}
{\rm
Let $W_1$ and $W_2$ be unitary operators on 
$\mathcal{H} = \oplus_{v \in V} \mathcal{H}_v$
and set $U =W_1W_2$ and $\tilde{U} = W_2W_1$.
Then, 
\[ (U, \{\mathcal{H}_v\}) \simeq (\tilde U, \{W_2 \mathcal{H}_v\})
	\simeq (U, \{W_1^* \mathcal{H}_v\}). \]
}
\end{proposition}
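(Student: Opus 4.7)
The plan is to verify each $\simeq$ in the chain by producing an explicit unitary on $\mathcal{H}$ and a bijection $\phi:V\to V$ satisfying the two conditions of Definition~\ref{defUV}: the evolution-intertwining $\mathscr{U}U_{1}\mathscr{U}^{-1}=U_{2}$ and the subspace identification $\mathscr{U}\mathcal{H}_{v}^{(1)}=\mathcal{H}_{\phi(v)}^{(2)}$.

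For the first link $(U,\{\mathcal{H}_{v}\})\simeq(\tilde U,\{W_{2}\mathcal{H}_{v}\})$ I take $\mathscr{U}_{1}:=W_{2}$ together with $\phi_{1}:=\mathrm{id}_{V}$. The one-line computation $\mathscr{U}_{1}U\mathscr{U}_{1}^{-1}=W_{2}(W_{1}W_{2})W_{2}^{-1}=W_{2}W_{1}=\tilde U$ handles the intertwining, and $\mathscr{U}_{1}\mathcal{H}_{v}=W_{2}\mathcal{H}_{v}$ is built into the definition of the right-hand decomposition.

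For the second link $(\tilde U,\{W_{2}\mathcal{H}_{v}\})\simeq(U,\{W_{1}^{*}\mathcal{H}_{v}\})$ the strategy is to reuse the very same conjugation template, but now with the roles of the two factors swapped. Viewing $\tilde U=W_{2}W_{1}$ as the ``first product'' built from the pair $(W_{2},W_{1})$ and applying the first-equivalence template to that pair yields the auxiliary relation $(\tilde U,\{\mathcal{H}_{v}\})\simeq(U,\{W_{1}\mathcal{H}_{v}\})$ via the unitary $W_{1}$, using the identity $W_{1}(W_{2}W_{1})W_{1}^{-1}=W_{1}W_{2}=U$. Composing the inverse of the first link with this auxiliary relation and re-writing $W_{1}=W_{1}^{-{*}}$ (equivalently $W_{1}^{-1}=W_{1}^{*}$) gives a composite unitary $\mathscr{U}_{2}$ on $\mathcal{H}$ that intertwines $\tilde U$ with $U$ and carries $\{W_{2}\mathcal{H}_{v}\}$ onto $\{W_{1}^{*}\mathcal{H}_{\phi_{2}(v)}\}$ for a bijection $\phi_{2}$ read off from the chain. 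Transitivity of $\simeq$ then delivers exactly the second link as stated.

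The main obstacle is coordinating the two conditions in Definition~\ref{defUV} in the second link. The set of unitaries intertwining $\tilde U$ with $U$ is the coset $W_{1}\cdot\mathrm{Comm}(\tilde U)$ (which contains both $W_{1}$ and $W_{2}^{-1}$ as natural representatives), while the set of unitaries carrying $\{W_{2}\mathcal{H}_{v}\}$ onto $\{W_{1}^{*}\mathcal{H}_{v}\}$ up to relabeling forms a different constrained family; the heart of the argument is singling out the unique element lying in both families and identifying the bijection $\phi_{2}$ it induces. Once the correct representative is fixed, Definition~\ref{defUV} is verified entry by entry, so the only real content of the proof is this compatibility bookkeeping.
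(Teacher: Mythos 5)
Your first link is exactly the paper's argument and is fine. The second link is where the proposal breaks down. Concretely: (a) you cannot compose ``the inverse of the first link'' with your auxiliary relation $(\tilde U,\{\mathcal H_v\})\simeq(U,\{W_1\mathcal H_v\})$, because the former ends at the pair $(U,\{\mathcal H_v\})$ while the latter starts at $(\tilde U,\{\mathcal H_v\})$ --- same decomposition but different evolutions, so transitivity does not apply; and if you instead apply your auxiliary template (conjugation by $W_1$) to the decomposition $\{W_2\mathcal H_v\}$, you land at $(U,\{W_1W_2\mathcal H_v\})=(U,\{U\mathcal H_v\})$, not at $(U,\{W_1^*\mathcal H_v\})$. (b) The ``re-writing $W_1=W_1^{-*}$'' step is vacuous: $W_1^{-1}=W_1^*$ holds for a unitary, but it does not convert the subspace $W_1\mathcal H_v$ into $W_1^*\mathcal H_v$. (c) Your closing paragraph correctly identifies the real difficulty --- exhibiting one unitary that simultaneously satisfies $\mathscr U\tilde U\mathscr U^{-1}=U$ and $\mathscr U(W_2\mathcal H_v)=W_1^*\mathcal H_{\phi(v)}$ --- but never produces it; the natural intertwiners $W_1$ and $W_2^{-1}$ both satisfy the first condition and both fail the second in general, so the ``compatibility bookkeeping'' you defer to is precisely the missing content.

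For comparison, the paper's proof of the second link is a one-line conjugation whose \emph{source} is $(U,\{\mathcal H_v\})$, not $(\tilde U,\{W_2\mathcal H_v\})$: taking $\mathscr U=W_1^*$ gives $\mathscr U U\mathscr U^{-1}=W_1^*W_1W_2W_1=W_2W_1=\tilde U$ and $\mathscr U\mathcal H_v=W_1^*\mathcal H_v$, hence $(U,\{\mathcal H_v\})\simeq(\tilde U,\{W_1^*\mathcal H_v\})$. Combined with the first link and symmetry/transitivity, all three pairs are mutually equivalent --- but note the evolution attached to the third decomposition is then $\tilde U$, not $U$. The displayed chain appears to carry a typo in its last entry, and the subsequent Gudder--Ambainis application (which uses $(U^{(G)},\{\mathcal H_j\})\simeq(U^{(A)},\{C^*\mathcal H_j\})$ with $C^*\mathcal H_j=\mathcal H_j$) only makes sense with the reading $(\tilde U,\{W_1^*\mathcal H_v\})$. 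So your attempt was aimed at the literal third entry, which is not what the intended (and easily provable) statement asserts, and the gap in your write-up sits exactly where that literal claim would have to be justified.
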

\begin{proof}
Let $\mathscr{U}=W_2$ and $\phi$ be an identity map on $V$.
Then,  $\mathscr{U}\mathcal{H}_v = W_2 \mathcal{H}_v$
and $\mathscr{U} U \mathscr{U}^{-1} = W_2(W_1W_2)W_2^{-1}=\tilde U$.
Hence, $(U, \{\mathcal{H}_v\}) \simeq (\tilde U, \{W_2 \mathcal{H}_v\})$.
Similarly, we know that
$(\tilde U, \{W_2 \mathcal{H}_v\})
	\simeq (U, \{W_1^* \mathcal{H}_v\})$
if we take $\mathscr{U} = W_1^*$.
\end{proof}
\begin{example}[Gudder and Ambainis type QWs]
{\rm
Here, we follow the notation of \cite{HKSS13}.
Let $S_\pi$ be a shift operator and $C=\oplus_{j \in V(\mathcal{G})} H_j$ a coin flip operator,
where $\pi$ is a partition on the line digraph of  a graph $\mathcal{G}$ 
and $\{H_j\}$ is a sequence of unitary operators on $\mathcal{H}_j$. 
Note that $C\mathcal{H}_j = \mathcal{H}_j$.
We observe, from Proposition \ref{prop_iso}, that
the Gudder type evolution $U^{(G)} = C S_\pi$
and the Ambainis type evolution $U^{(A)} = S_\pi C$
are unitarily equivalent and
\[ (U^{(G)}, \{\mathcal{H}_j\}) \simeq (U^{(A)}, \{\mathcal{H}_j\}). \]
}
\end{example}

\subsection{Generators}
It is well known that for a unitary operator $U$,
there exists a unique self-adjoint operator $H$
such that  
\begin{equation}
\label{eqUH} 
 E_H([0,2\pi)) = I
	\quad \mbox{and} \quad U= e^{i H},
\end{equation} 
where $E_H$ is the spectral measure of $H$.
The state of a quantum walker at time $n \in \mathbb{N}$
is represented as $\Psi_n = e^{i n H} \Psi_0$ ($n \in \mathbb{N}$).
In this sense, we define the generator of a unitary operator 
as follows:
\begin{definition}
{\rm
A self-adjoint operator
$H$ is the generator of a unitary operator $U$,
if \eqref{eqUH} holds.
}
\end{definition}
Let $H$ be the generator of an evolution 
$(U, \{\mathcal{H}_v\}_{ v \in V}) \in \mathscr{F}_{\rm QW}$.
Then, the probability $\nu_n(x)$ of  finding a quantum walker
at vertex $x \in V$ at time $n \in \mathbb{N}$ is given by
	\[ \nu_n(x) = \| P_x  e^{i n H} \Psi_0 \|^2. \]
We denote by $\nu_n(R)$ the probability of 
finding a quantum walker in $R \subset V$:
\[ \nu_n(R) = \sum_{x \in R} \nu_n(x). \]
We denote  $\nu_n(x)$ (resp., $\nu_n(R)$) 
by $\nu_n^{\Psi_0}(x)$ (resp., $\nu_n^{\Psi_0}(R)$) 
to emphasize the dependence on the initial state.
The time average $\bar\nu_N^{\Psi_0}$ of $\nu_n$ 
and its infinite time limit $\bar\nu_\infty^{\Psi_0}$, if it exists,
are given by
\[ \bar\nu_N^{\Psi_0}(R)
	= \frac{1}{N}\sum_{n=0}^{N-1} \nu_n^{\Psi_0}(R)
	\quad \mbox{and}
	\quad \bar\nu_\infty^{\Psi_0}(R) = \lim_{N \to \infty} \bar\nu_N^{\Psi_0}(R). \]
\begin{proposition}
\label{proplocal}
{\rm
Let $H$ be the generator of an evolution 
$(U, \{\mathcal{H}_v\}_{v \in V}) \in \mathscr{F}_{\rm QW}$,
and assume that ${\rm dim} \mathcal{H}_v < \infty$ ($v \in V$).
\begin{itemize}
\item[(i)] $\Psi_0 \in \mathcal{H}_{\rm c}(H)$
if and only if 
$\bar\nu_\infty^{\Psi_0}(R) = 0$
for all finite sets $R \subset V$.
\item[(ii)] $\Psi_0 \in \mathcal{H}_{\rm p}(H)$
if and only if
$\lim_{m \to \infty} \sup_{n} \nu_n^{\Psi_0}(R_m^{\rm c}) = 0$
for any increasing sequence $\{R_m\}_m$ of finite sets 
such that $\bigcup_m R_m = V$.
\end{itemize}
}
\end{proposition}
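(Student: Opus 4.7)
The plan is to reduce both parts to the discrete analog of the RAGE theorem, exploiting the fact that, under the assumption $\dim \mathcal{H}_v < \infty$, the orthogonal projection $P_R := \sum_{v \in R} P_v$ onto $\bigoplus_{v\in R}\mathcal{H}_v$ is finite rank (hence compact) whenever $R \subset V$ is finite. Rewriting $\nu_n^{\Psi_0}(R) = \|P_R U^n \Psi_0\|^2 = \|P_R e^{inH}\Psi_0\|^2$, the whole problem becomes a statement about how the orbit $\{e^{inH}\Psi_0\}_{n \in \mathbb{N}}$ interacts with a compact projection.

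For (i), I would invoke Wiener's theorem in its discrete form: for a finite complex Borel measure $\mu$ on $[0,2\pi)$,
\[ \lim_{N \to \infty}\frac{1}{N}\sum_{n=0}^{N-1}|\widehat{\mu}(n)|^2 = \sum_{\lambda}|\mu(\{\lambda\})|^2. \]
Picking an orthonormal basis $e_1,\dots,e_r$ of $\mathrm{Ran}(P_R)$ and applying this to each scalar spectral measure $\mu_{e_j,\Psi_0}(B) = \langle e_j, E_H(B)\Psi_0\rangle$, then summing the resulting $r$ identities, I obtain
\[ \lim_{N \to \infty}\bar\nu_N^{\Psi_0}(R) = \sum_{\lambda \in \mathbb{R}}\|P_R E_H(\{\lambda\})\Psi_0\|^2. \]
If $\Psi_0\in\mathcal{H}_{\rm c}(H)$, every point mass $E_H(\{\lambda\})\Psi_0$ vanishes and the limit is zero. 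Conversely, if the limit is zero for every finite $R$, then $P_R E_H(\{\lambda\})\Psi_0=0$ for every $\lambda$ and every finite $R$; letting $R\nearrow V$ so that $P_R \to I$ strongly forces $E_H(\{\lambda\})\Psi_0=0$ for every $\lambda$, i.e.\ $\Psi_0\in\mathcal{H}_{\rm c}(H)$.

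For (ii), the forward direction follows by approximation. If $\Psi_0\in\mathcal{H}_{\rm p}(H)$, expand $\Psi_0=\sum_j c_j\phi_j$ in an orthonormal basis of eigenvectors $H\phi_j=\lambda_j\phi_j$ and, given $\varepsilon>0$, truncate to a finite sum $\Psi_0^{(N)}$ with $\|\Psi_0-\Psi_0^{(N)}\|<\varepsilon/2$. Since $U^n\phi_j = e^{in\lambda_j}\phi_j$, the triangle inequality gives
\[ \|(I-P_{R_m})U^n\Psi_0\| \le \tfrac{\varepsilon}{2} + \sum_{j=1}^{N}|c_j|\,\|(I-P_{R_m})\phi_j\|, \]
uniformly in $n$, and the finite sum on the right tends to $0$ as $m \to \infty$ since $P_{R_m}\to I$ strongly. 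For the converse, decompose $\Psi_0=\Psi_0^{\rm p}+\Psi_0^{\rm c}$ with $\Psi_0^{\rm p}\in\mathcal{H}_{\rm p}(H)$ and $\Psi_0^{\rm c}\in\mathcal{H}_{\rm c}(H)$, and note
\[ \|\Psi_0^{\rm c}\|^2 = \bar\nu_N^{\Psi_0^{\rm c}}(V) \le \bar\nu_N^{\Psi_0^{\rm c}}(R_m) + \sup_n \nu_n^{\Psi_0^{\rm c}}(R_m^{\rm c}). \]
Part (i) makes the first term vanish as $N\to\infty$; the second is bounded by $\bigl(\sqrt{\sup_n\nu_n^{\Psi_0}(R_m^{\rm c})}+\sqrt{\sup_n\nu_n^{\Psi_0^{\rm p}}(R_m^{\rm c})}\bigr)^2$, and both suprema tend to $0$ as $m\to\infty$ (the first by hypothesis, the second by the forward implication already established). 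Hence $\Psi_0^{\rm c}=0$.

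The main obstacle is the careful bookkeeping around Wiener's theorem: one has to pass correctly from its scalar form, applied to individual matrix coefficients, to the operator-theoretic identity for the Cesàro average of $\|P_R e^{inH}\Psi_0\|^2$, and interchange the finite sum over the basis of $\mathrm{Ran}(P_R)$ with the limit in $N$. The hypothesis $\dim\mathcal{H}_v<\infty$ together with the finiteness of $R$ is precisely what keeps all these operators finite rank and licenses these manipulations.
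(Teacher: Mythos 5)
Your proof is correct, and while it shares the two core analytic ingredients with the paper's proof --- the Wiener/Ces\`aro computation showing that continuous spectral measures give vanishing time averages, and the truncation to finitely many eigenvectors for the point-spectrum direction --- the logical architecture for the converse implications is genuinely different. The paper introduces the sets $\mathcal{H}_1$ (vectors with vanishing time-averaged probability on finite sets) and $\mathcal{H}_2$ (vectors with uniformly small tails), proves only the inclusions $\mathcal{H}_{\rm c}(H)\subset\mathcal{H}_1$ and $\mathcal{H}_{\rm p}(H)\subset\mathcal{H}_2$ together with a separate duality lemma $\mathcal{H}_1\perp\mathcal{H}_2$, and then closes both equivalences simultaneously by the sandwich $\mathcal{H}_2\subset\mathcal{H}_1^\perp\subset\mathcal{H}_{\rm p}(H)\subset\mathcal{H}_2$ (and its mirror), using $\mathcal{H}=\mathcal{H}_{\rm p}(H)\oplus\mathcal{H}_{\rm c}(H)$. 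You instead get the converse of (i) for free from the exact Wiener identity $\lim_N\bar\nu_N^{\Psi_0}(R)=\sum_\lambda\|P_RE_H(\{\lambda\})\Psi_0\|^2$ (the paper never identifies this limit, only shows it vanishes on $\mathcal{H}_{\rm c}(H)$), and the converse of (ii) from the decomposition $\Psi_0=\Psi_0^{\rm p}+\Psi_0^{\rm c}$ plus a triangle-inequality estimate that feeds back into part (i). Your route buys a sharper quantitative statement (the value of $\bar\nu_\infty^{\Psi_0}(R)$ as a sum over atoms) and avoids the orthogonality lemma entirely; the paper's route avoids computing the limit and handles both converses with one symmetric duality argument. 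Both are sound; your interchange of the finite sum over the basis of $\mathrm{Ran}(P_R)$ with the limit in $N$ is licensed exactly as you say by $\dim\mathcal{H}_v<\infty$ and the finiteness of $R$.
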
%
\noindent
This proposition is the discrete analog of the RAGE theorem.
The proof is standard, 
but we include it  in the appendix for completeness.

In \cite[Definition 6]{HKSS14}, the authors say that
localization occurs if 
\begin{equation} 
\label{eqloc}
\limsup_{n \to \infty} \nu_n^{\Psi_0}(x) > 0
	\quad \mbox{with some $x \in V$.}
\end{equation}
As will be proved in the appendix,
\eqref{eqloc} holds 
if $\lim_{m \to \infty} \sup_{n} \nu_n^{\Psi_0}(R_m^{\rm c})$
$ = 0$
for some increasing sequence $\{R_m\}$ such that  $\bigcup_m R_m = V$.
Hence, localization occurs if $\Psi_0 \in \mathcal{H}_{\rm p}(H)$.

Let $P_{\sharp}(H)$ be the orthogonal projection onto $\mathcal{H}_{\sharp}(H)$
for $\sharp = {\rm p, ac}$.
\begin{proposition}
\label{prop_loc}
{\rm
Let $H$ and
$(U, \{\mathcal{H}_v\}_{v \in V}) \in \mathscr{F}_{\rm QW}$
be as in Proposition \ref{proplocal}.
Suppose that $\Psi_0 \in \mathcal{H}_{\rm sc}(H)^\perp$.
Then, the following are equivalent:
\begin{itemize}
\item[(a)] Localization occurs.
\item[(b)] $\Psi_0$ overlaps with $\mathcal{H}_{\rm p}(H)$, 
{\it i.e.}, $P_{\rm p}(H)\Psi_0 \not=0$.
\end{itemize}
}
\end{proposition}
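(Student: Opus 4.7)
The plan is to decompose $\Psi_0$ according to the spectral decomposition of $H$ and track how each piece contributes to $\nu_n^{\Psi_0}(x)$. Under the hypothesis $\Psi_0 \in \mathcal{H}_{\rm sc}(H)^\perp$, write
\[ \Psi_0 = \Psi_{\rm p} + \Psi_{\rm ac}, \qquad \Psi_{\rm p} = P_{\rm p}(H)\Psi_0, \quad \Psi_{\rm ac} = P_{\rm ac}(H)\Psi_0. \]
Setting $a_n = P_x U^n \Psi_{\rm p}$ and $b_n = P_x U^n \Psi_{\rm ac}$, we have $\nu_n^{\Psi_0}(x) = \|a_n + b_n\|^2$, and since $\|a_n\|$ is bounded by $\|\Psi_{\rm p}\|$, the inequality $\bigl|\,\nu_n^{\Psi_0}(x) - \|a_n\|^2\bigr| \le 2\|a_n\|\|b_n\| + \|b_n\|^2$ reduces matters to comparing $\limsup_n \|a_n\|^2 = \limsup_n \nu_n^{\Psi_{\rm p}}(x)$ with $\|b_n\|^2 = \nu_n^{\Psi_{\rm ac}}(x)$.

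The key auxiliary fact is that $\nu_n^{\Psi_{\rm ac}}(x) \to 0$ for every $x \in V$. Here I would use the finite-dimensionality assumption ${\rm dim}\,\mathcal{H}_x < \infty$: picking an orthonormal basis $\{\phi_i\}$ of $\mathcal{H}_x$, we have $\nu_n^{\Psi_{\rm ac}}(x) = \sum_i |\langle \phi_i, U^n \Psi_{\rm ac}\rangle|^2$, and each inner product equals $\int_0^{2\pi} e^{in\theta}\,d\mu_{\phi_i,\Psi_{\rm ac}}(\theta)$ where the complex spectral measure is absolutely continuous with respect to Lebesgue (it is dominated by $\|\phi_i\|\cdot d\|E_H(\cdot)\Psi_{\rm ac}\|$). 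The Riemann--Lebesgue lemma then forces each term to $0$. Combined with the preceding reduction, this yields
\[ \limsup_{n\to\infty} \nu_n^{\Psi_0}(x) = \limsup_{n\to\infty} \nu_n^{\Psi_{\rm p}}(x) \quad \text{for every } x \in V. \]

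For (a)$\Rightarrow$(b) I argue by contrapositive: if $\Psi_{\rm p} = 0$, the displayed identity shows $\limsup_n \nu_n^{\Psi_0}(x) = 0$ for all $x$, so localization fails. For (b)$\Rightarrow$(a), assume $\Psi_{\rm p} \neq 0$. Applying Proposition \ref{proplocal}(ii) to $\Psi_{\rm p} \in \mathcal{H}_{\rm p}(H)$ produces an increasing exhaustion $\{R_m\}$ with $\sup_n \nu_n^{\Psi_{\rm p}}(R_m^{\rm c}) \to 0$; choose $m_0$ so that this supremum is smaller than $\|\Psi_{\rm p}\|^2/2$, giving $\inf_n \nu_n^{\Psi_{\rm p}}(R_{m_0}) > \|\Psi_{\rm p}\|^2/2$. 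If we had $\limsup_n \nu_n^{\Psi_{\rm p}}(x) = 0$ for every $x$, then the finite sum over $x \in R_{m_0}$ would tend to $0$, contradicting the uniform lower bound; hence some $x$ satisfies $\limsup_n \nu_n^{\Psi_{\rm p}}(x) > 0$, and by the preceding identity the same holds for $\nu_n^{\Psi_0}(x)$. The main technical obstacle is the absolute continuity step of paragraph two: one must verify that the off-diagonal spectral measures $\mu_{\phi_i,\Psi_{\rm ac}}$ are genuinely absolutely continuous so that Riemann--Lebesgue applies, and this is precisely where the finite-dimensionality of $\mathcal{H}_x$ is used to keep the sum over $i$ finite.
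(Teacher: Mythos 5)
Your proof is correct and follows essentially the same route as the paper's: decompose $\Psi_0=\Psi_{\rm p}+\Psi_{\rm ac}$, show $\|P_xU^n\Psi_{\rm ac}\|\to 0$ using the finite rank of $P_x$ (the paper asserts this via compactness of $P_x$; your Riemann--Lebesgue argument is the standard justification), and compare $\limsup_n\nu_n^{\Psi_0}(x)$ with $\limsup_n\nu_n^{\Psi_{\rm p}}(x)$. The only cosmetic differences are that you run (a)$\Rightarrow$(b) by contrapositive and that you reprove inline, via Proposition \ref{proplocal}(ii), the fact that $\Psi_{\rm p}\neq 0$ forces $\limsup_n\nu_n^{\Psi_{\rm p}}(x)>0$ for some $x$ --- a step the paper delegates to the lemma in Appendix \ref{subsecProfeqloc}.
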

\begin{proof}
By assumption, we can write $\Psi_0 = \Psi_{\rm p} + \Psi_{\rm ac}$
with $\Psi_\sharp =P_{\sharp}(H)\Psi_0$ ($\sharp = {\rm p, ac}$).
Because, by assumption, $P_x$ is compact,
$P_x U^n \Psi_{\rm ac}$ converges strongly to zero as $n \to \infty$.
Hence,
\begin{align*} 
|\nu_n^{\Psi_0}(x)^{1/2} - \nu_n^{\Psi_{\rm p}}(x)^{1/2}|
\leq \|P_x U^n \Psi_{\rm ac}\| \to 0 \quad (n \to \infty).
\end{align*}
Assuming (a), we get $\epsilon_0 := \limsup_{n \to \infty}\nu_n^{\Psi_0}(x) > 0$ 
with some $x \in V$.
Because
$\nu_n^{\Psi_{\rm p}}(x) \geq \nu_n^{\Psi_0}(x) - \epsilon_0/2$ for sufficiently large $n$,
\[ \|\Psi_{\rm p}\|^2
	\geq \nu_n^{\Psi_0}(x) - \epsilon_0/2. \]
Taking the limit superior on both sides, we have (b).
Conversely, we assume (b).
Then, $\Psi_{\rm p} \not=0$.
By the above argument, 
$\epsilon_1 := \limsup_{n \to \infty}\nu_n^{\Psi_{\rm p}}(x) > 0$ 
with some $x \in V$.
Because $\nu_n^{\Psi_0}(x) \geq \nu_n^{\Psi_{\rm p}}(x) - \epsilon_1/2$ 
for sufficiently large $n$,
\[ \limsup_{n \to \infty}\nu_n^{\Psi_0}(x) 
	\geq \epsilon_1/2 > 0. \]
This proves (a).
\end{proof}

\section{Abstract Szegedy walk}
\label{subsectionevolution}
In this section, we treat a specific class of abstract QWs, 
an extension of the Szegedy walks. 
Let us recall some notations and facts from \cite{SS15}.
Let $\mathcal{H}$ and $\mathcal{K}$ be complex Hilbert spaces. 
We assume that there exists a coisometry operator $d_A:\mathcal{H} \to \mathcal{K}$, {\it i.e.}, $d_A$ is bounded and satisfies
\begin{equation}
\label{ddstar}
d_A d_A^* = I_{\mathcal{K}}, 
\end{equation}
where $I_{\mathcal{K}}$ is the identity operator on $\mathcal{K}$.
By \eqref{ddstar}, $d_A$ is a partial isometry and surjection, its adjoint $d_A^*:\mathcal{K} \to \mathcal{H}$ is an isometry, and $\Pi_\mathcal {A}:= d_A^*d_A$
is the projection onto  $\mathcal{A}:={\rm Ran} (d_A^* d_A) = d_A^* \mathcal{K}$.
We call the self-adjoint operator $C := 2 d_A^* d_A - 1$ on $\mathcal{H}$ 
a {\it coin operator}, 
because we observe that
$C$ is a unitary involution and decomposed into 
\[ C = I_\mathcal{A}  \oplus (-I_{\mathcal{A}^\perp }) \quad \mbox{on 
$\mathcal{H} =\mathcal{A} \oplus \mathcal{A}^\perp$.} \]
This also proves that $\mathcal{A} = \ker(C-1)$ and $\mathcal{A}^\perp = \ker(C+1)$.

Let $S$ be a unitary involution on $\mathcal{H}$. 
We decompose $S$ into $S=I_\mathcal{S}\oplus (-I_{\mathcal{S}^\bot})$ on $\mathcal{H}=\mathcal{S}\oplus \mathcal{S}^\bot$, where
$\mathcal{S}=\mathrm{ker}(S-1)$ and $\mathcal{S}^\perp=\mathrm{ker}(S+1)$.
Then $d_B := d_A S$ is also a coisometry.
Throughout this subsection, we fix $d_A$ and $S$,
and call them a {\it boundary operator} and a {\it shift operator}, respectively.
In analogy with the twisted Szegedy walk (see Example \ref{ExampleTSW} below),
we define an abstract evolution $U$ 
and its {\it discriminant} $T$ as follows:
\begin{definition}
\label{def04062144}
{\rm 
Let $d_A$, $d_B$, $C$, and $S$ be as above.
\begin{itemize}
\label{abstSz}
\item[(1)] The evolution associated with 
	the boundary operator $d_A$ and the shift operator $S$ is defined by $U = S C$.  
\item[(2)] The discriminant of $U$ is defined by $T = d_A d_B^*$. 
\end{itemize}
}
\end{definition}
We note that
$S$, $C$, and $U$ are unitary on $\mathcal{H}$.
By definition, the discriminant $T$ 
is a bounded self-adjoint operator on $\mathcal{K}$
with $\|T\| \leq 1$.
Let 
\begin{equation}
\label{Dpmperp} 
\mathcal{D}^\perp_+ = \mathcal{A}^\perp \cap \mathcal{S}^\perp,
	\quad \mathcal{D}^\perp_- = \mathcal{A}^\perp \cap \mathcal{S}. 
\end{equation}
\begin{theorem}[\cite{SS15}]
\label{thm0409}
{\rm
Let $M_\pm = {\rm dim}\mathcal{D}^\perp_\pm$.
\begin{itemize}
\item[(1)] $\sigma(U) = \{ e^{i\xi} \mid \cos \xi \in \sigma(T), 
	\xi \in [0,2\pi) \} \cup \{+1\}^{M_+} \cup \{-1\}^{M_-}$;
\item[(2)]  $\sigma_{\rm p}(U) = \{ e^{i\xi} \mid \cos \xi \in \sigma_{\rm p}(T), 
	\xi \in [0,2\pi) \} \cup \{+1\}^{M_+} \cup \{-1\}^{M_-}$,
\end{itemize}
where we use $\{\pm 1\}^{M_\pm}$ to denote the multiplicity of $\pm 1$
and set $\{\pm 1\}^{M_\pm} = \emptyset$ if $M_\pm = 0$.
}
\end{theorem}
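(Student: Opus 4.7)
The plan is to decompose $\mathcal{H}$ into $U$-invariant subspaces on which $U$ acts either as $\pm 1$ or in a way directly computable from $T$. Let $\mathcal{L} := d_A^*\mathcal{K} + d_B^*\mathcal{K} = \mathcal{A} + S\mathcal{A}$, so that $\mathcal{L}^\perp = \mathcal{A}^\perp \cap S\mathcal{A}^\perp$. Since $S$ is an involution and $C = 2d_A^*d_A - 1$ restricts to $\pm I$ on $\mathcal{A}$ and $\mathcal{A}^\perp$, a direct verification shows that both $\mathcal{L}$ and $\mathcal{L}^\perp$ are invariant under $S$ and $C$, hence under $U = SC$.

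On $\mathcal{L}^\perp \subset \mathcal{A}^\perp$ the coin acts as $-I$, so $U = -S$ there. Intersecting with the $S$-eigenspaces yields $\mathcal{L}^\perp \cap \mathcal{S} = \mathcal{A}^\perp \cap \mathcal{S} = \mathcal{D}_-^\perp$ (on which $U = -I$) and $\mathcal{L}^\perp \cap \mathcal{S}^\perp = \mathcal{A}^\perp \cap \mathcal{S}^\perp = \mathcal{D}_+^\perp$ (on which $U = +I$). This accounts for the $\{\pm 1\}^{M_\pm}$ contributions to both $\sigma(U)$ and $\sigma_{\rm p}(U)$.

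On $\mathcal{L}$ I would write a generic element as $\Psi = d_A^* f + d_B^* g$ with $f, g \in \mathcal{K}$ and compute, using $d_A d_A^* = I_\mathcal{K}$ and $d_A d_B^* = T$, that
\[ U\Psi = -d_A^* g + d_B^*(f + 2Tg). \]
Thus, in the (not necessarily orthonormal) coordinates $(f,g)$, $U$ is represented by the action $(f,g) \mapsto (-g,\ f + 2Tg)$ on $\mathcal{K}\oplus\mathcal{K}$. The eigenvalue equation $U\Psi = e^{i\xi}\Psi$ reduces, after eliminating $g = -e^{i\xi}f$, to $Tf = \cos\xi\cdot f$, which proves assertion (2). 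For (1) one replaces eigenvectors by Weyl sequences: given $\{f_n\} \subset \mathcal{K}$ with $\|f_n\|=1$ and $(T-\lambda)f_n \to 0$, the vectors $\Psi_n := d_A^* f_n - e^{i\xi} d_B^* f_n$ (with $\cos\xi = \lambda$) satisfy $(U - e^{i\xi})\Psi_n \to 0$, and the converse is obtained by projecting a Weyl sequence for $U$ onto $\mathcal{L}$ and extracting its $\mathcal{K}$-component.

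The main technical obstacle is that the coordinate map $(f,g) \mapsto d_A^* f + d_B^* g$ fails to be injective precisely where $T$ has spectral value $\pm 1$: there $d_A^* f$ and $d_B^* f$ become (anti)parallel, so the candidate Weyl sequences $\Psi_n$ risk degenerating and one must control $\|\Psi_n\|$ from below by an inner-product computation that brings in $T$ itself. These endpoint cases are exactly where $\varphi^{-1}$ collapses to the single point $\pm 1 \in \sigma(U)$, and the additional $\pm 1$-eigenvectors coming from $\mathcal{L}^\perp$ are precisely what the multiplicities $M_\pm$ record; keeping the bookkeeping between the $\mathcal{L}$ and $\mathcal{L}^\perp$ contributions consistent is the main point that requires care.
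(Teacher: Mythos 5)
The paper does not actually prove Theorem~\ref{thm0409}; it is imported verbatim from \cite{SS15} (``we summarize the results from \cite{SS15} without proofs''), so there is no in-paper argument to compare against. Judged on its own, your skeleton is the standard one and the computations you do carry out are correct: $U d_A^*=d_B^*$, $Ud_B^*=2d_B^*T-d_A^*$, hence $U(d_A^*f+d_B^*g)=-d_A^*g+d_B^*(f+2Tg)$, and the elimination $g=-e^{i\xi}f$, $Tf=\cos\xi\, f$ is right; so is the identification of $\mathcal{L}^\perp\cap\ker(S\mp1)$ with $\mathcal{D}_\pm^\perp$ and the action $U=\pm I$ there.

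There are, however, two genuine gaps, both of which you flag but neither of which you close. First, the converse inclusion in (1): ``projecting a Weyl sequence for $U$ onto $\mathcal{L}$ and extracting its $\mathcal{K}$-component'' is not a well-defined operation. The map $(f,g)\mapsto d_A^*f+d_B^*g$ is not injective (it degenerates on $d_A^*\ker(T^2-1)$, where $d_B^*f=\pm d_A^*f$), $\mathcal{L}=\mathcal{A}+\mathcal{B}$ need not be closed (the paper works with $\mathcal{D}=\overline{\mathcal{A}+\mathcal{B}}$ for exactly this reason), and the ``inverse'' coordinate map is unbounded near $\ker(T^2-1)$, so a Weyl sequence for $U$ does not obviously yield one for $T$. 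The clean fix is the intertwining relation $\tfrac12(U+U^{*})d_A^*=d_A^*T$ (equivalently $d_A\varphi(U)=Td_A$ on $\mathcal D$), which transports spectrum through the self-adjoint operator $\varphi(U)=\tfrac12(U+U^{-1})$ and gives both inclusions at once; you never state it. Second, the endpoint and multiplicity bookkeeping at $\pm1$ is left undone: your candidate vectors $\Psi=d_A^*f-e^{i\xi}d_B^*f$ satisfy $\|\Psi\|^2=2\sin^2\xi\,\|f\|^2$, so they vanish identically when $\cos\xi=\pm1$, and one must instead use $d_A^*f_n$ alone (noting $\|(U\mp1)d_A^*f_n\|^2=2\langle f_n,(1\mp T)f_n\rangle\to0$) and establish $\ker(U\mp1)=d_A^*\ker(T\mp1)\oplus\mathcal{D}_\pm^\perp$ to separate the $\varphi^{-1}(\sigma_{\rm p}(T))$ contribution from the $\{\pm1\}^{M_\pm}$ contribution. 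Without these two pieces the statement with its multiplicities is not yet proved.
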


\begin{example}[Twisted Szegedy walk \cite{HKSS14}]
\label{ExampleTSW}
{\rm
Let $G =(V, E)$ be a (possibly infinite) graph
with the sets $V$ of vertices and $E$ of unoriented edges
(possibly including multiple edges and loops). 
We consider that each edge $e \in E$ with end vertices $V(e) = \{ u,v\}$ has  two orientations such that the origin of $e$ is $u$ or $v$,
and we denote the set of such oriented edges by $D$.
For each edge $e \in D$, we use $o(e)$ (resp. $t(e)$) to denote 
the origin (resp. terminal) of $e \in D$.
The inverse edge of $e \in D$ 
is denoted by $\bar{e}$, 
with the result that $o(\bar{e}) = t(e)$ and $t(\bar{e}) = o(e)$. 
Note that $e\in D$ if and only if $\bar{e}\in D$.
Let $\mathcal{H} = \ell^2(D)$ and $\mathcal{K} = \ell^2(V)$.
We define a {\it boundary operator} $d_A^{(w)}:\mathcal{H} \to \mathcal{K}$
as follows.
We call $w:D \to \mathbb{C}\setminus\{0\}$ a {\it weight} if it satisfies $w(e) \not=0$ 
and
\begin{equation}
\label{eqweight}
\sum_{e:o(e) = v} |w(e)|^2 = 1 \quad \mbox{for all $v \in V$}. 
\end{equation}
For a weight $w$ and all $\psi \in \mathcal{H}$, $d_A^{(w)}\psi \in \mathcal{K}$ 
is given by
\begin{align*}
(d_A^{(w)}\psi)(v) = \sum_{e:o(e)=v} \psi(e) \overline{w(e)}, \quad v \in V.
\end{align*}
The adjoint $d_A^{(w)*}:\mathcal{K} \to \mathcal{H}$ of $d_A^{(w)}$
is a {\it coboundary operator} and satisfies
\begin{align*}
(d_A^{(w)*}f)(e) = w(e) f(o(e)), \quad e \in D
\end{align*} 
for all $f \in \mathcal{K}$.
We observe that  $d_A^{(w)}$ is a coisometry, {\it i.e.}, 
$d_A^{(w)} d_A^{(w)*} = I_\mathcal{K}$,
because, from \eqref{eqweight},
\[ (d_A^{(w)}d_A^{(w)*}f)(v) = \sum_{e:o(e)=v} (d_A^{(w)*}f)(e) \overline{w(e)}
	 = \sum_{e:o(e)=v}  |w(e)|^2 f(o(e)) = f(v). \]
The coin operator is defined by $C^{(w)} = 2 d_A^{(w)*} d_A^{(w)} - 1$,
and  the (twisted) shift operator by
$(S^{(\theta)}\psi)(e) = e^{-i\theta(e)}\psi(\bar{e})$ ($e \in D$),
where $\theta:D \to \mathbb{R}$ is
a 1-form and satisfies
$\theta(\bar{e}) = - \theta(e)$ ($e \in D$). 
It is easy to check that $S^{(\theta)}$ is a unitary involution.
The evolution of the twisted Szegedy walk
associated with the weight $w$ and the 1-form $\theta$
is defined by $U^{(w,\theta)} = S^{(\theta)} C^{(w)}$.
The operators $d_A^{(w)}$, $C^{(w)}$, 
and $S^{(\theta)}$
are examples of the abstract coisometry $d_A$, coin operator $C$,
and shift operator $S$, respectively.
The discriminant of $U^{(w,\theta)}$ is defined by 
$ T^{(w,\theta)} = d_A^{(w)} d_B^{(w,\theta)*}$,
where $d_B^{(w,\theta)} = d_A^{(w)} S^{(\theta)}$.
We now show that $U^{(w,\theta)}$ is an evolution of QW.
To this end, we set
\begin{equation}
\label{eq050802:05} 
\mathcal{H}_v 
	= \overline{\rm Span}\left\{  \delta_e 
		\mid e \in D, o(e) = v \right\}, 
\end{equation}
where $\overline{\rm Span} A$ is 
the closure of the linear span of a set $A$
and $\delta_e \in \ell^2(D)$ is given by $\delta_e(e) = 1$ 
and $\delta_e(f) = 0$ ($e \not= f$).
Then, we can decompose $\mathcal{H}$ into 
$\mathcal{H} = \bigoplus_{ v \in V} \mathcal{H}_v$. 
Thus, we know that 
$(U^{(w,\theta)}, \{\mathcal{H}_v\}_{v \in V}) \in \mathscr{F}_{\rm QW}$.
Observe that the orthogonal projection onto $\mathcal{H}_v$ is given by
\[ P_v = \sum_{e \in D: o(e) = v} | e \rangle \langle e|, \]
where $| e \rangle \langle e| = \langle \delta_e, \cdot \rangle \delta_e$ 
is the orthogonal projection onto 
the one dimensional subspace $\{ \alpha \delta_e \mid \alpha \in \mathbb{C} \}$. 
The probability $\nu_n: V\to [0,1]$ of finding a quantum walker at time $n$ is
\begin{align*} 
\nu_n(x) & 
	= \sum_{e \in D: o(e) = x} 
		|\langle \delta_e, \Psi_n \rangle|^2
	 = \sum_{e \in D: o(e) = x} |\Psi_n (e)|^2.
\end{align*}
Let $G_U$ be the associated graph of $U$.
We observe that
\[ U_{uv} = \sum_{e:o(e) = u, t(e)=v} \sum_{f:o(f)=v} 
	w(\bar{e}) \overline{w(f)} (2-\delta_{f\bar{e}})
		 e^{i \theta(\bar{e})} |e\rangle \langle f|  \] 
is non-zero if and only if there exists $e \in D$ such that $o(e) =u$ and $t(e)=v$.
Hence, $G_U$ is identified with a subgraph of $G$.
If $G$ has no multiple edges, $G_U \simeq G$.
}
\end{example}

\begin{remark}
Recently, Ohno proved that
any space-homogeneous QW on $\mathbb{Z}$ such as the model in 
Example \ref{ex_homogeneous} is
equivalent to an abstract QW 
associated with some boundary operator and shift operator.
Even for an inhomogeneous case such as \cite{K,ShK},
we can show that the evolution is associated with some boundary operator 
and shift operator. 
\end{remark}

\begin{remark}
We should remark that 
for a time-dependent abstract Szegedy walk $U_1\to U_2\to \cdots \to U_n$, 
the spectrum of $U_{n}U_{n-1}\cdots U_1$ cannot be described by the discriminant operators $T_nT_{n-1}\cdots T_1$ in general, 
since our analysis proposed here essentially works well when the time evolution is decomposed into two involution operators 
$U=E_2E_1$~\cite{SS15}. 
Applying our abstract QW to the time-dependent QW effectively is an open problem. 
\end{remark}

In what follows, we introduce closed subspaces of $\mathcal{H}$ 
that play an important role in this paper:
\begin{align*}
\mathcal{D} = \overline{\mathcal{A} + \mathcal{B}},\quad
\mathcal{D}_0 = \mathcal{A} \cap \mathcal{B}, \quad
\mathcal{D}_1 = \mathcal{D}_0^\perp \cap \mathcal{D}. 
\end{align*}
Here, we denote by $\mathcal{A}$ and $\mathcal{B}$ the subspaces ${\rm Ran}(d_A^*d_A)$ and ${\rm Ran}(d_B^*d_B)$, respectively.
Clearly,
\begin{align*} 
\mathcal{H} & = \mathcal{D} \oplus  \mathcal{D}^\perp \\
	& = \mathcal{D}_1 \oplus  \mathcal{D}_0 \oplus  \mathcal{D}^\perp.
\end{align*}

We state the basic properties of these subspaces without proof.
For the proof, one can consult  \cite{SS15},
where we used the notations $\mathcal{L}$, $\mathcal{L}_1$, and $\mathcal{L}_0$
with $\mathcal{D} = \overline{\mathcal{L}}$,
$\mathcal{D}_1 = \overline{\mathcal{L}_1}$,
and $\mathcal{D}_0 = \mathcal{L}$. 
\begin{proposition}
\label{prop42910:04}
{\rm
Let $U$ be as above and $T = d_Ad_B^*$ the discriminant of $U$.
$U$ leaves $\mathcal{D}$, $\mathcal{D}_1$, $\mathcal{D}_0$,
and $\mathcal{D}^\perp$ invariant.
Moreover, the following hold:
\begin{itemize}
\item[(i)] $\mathcal{D}_0 =  d_A^* \ker(T^2-1) = d_B^* \ker(T^2-1)$;
\item[(ii)] $\mathcal{D}_1 
		= \overline{d_A^* \ker(T^2-1)^\perp + d_B^*\ker(T^2-1)^\perp}$;
\item[(iii)] $\mathcal{D}^\perp = \ker (d_A) \cap \ker (d_B)$.
\end{itemize}
}
\end{proposition}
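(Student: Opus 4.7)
The plan is to handle (iii) and the invariance claims via the algebraic relations $Sd_A^* = d_B^*$ and $d_Ad_B^* = T$, derive (i) from a direct norm computation involving the discriminant $T$, and then obtain (ii) by orthogonally decomposing $\mathcal{A}+\mathcal{B}$ with the help of (i). For (iii), since $d_A^*$ is an isometry, $\mathcal{A}^\perp = \ker(d_A^* d_A) = \ker d_A$ and likewise $\mathcal{B}^\perp = \ker d_B$, so $\mathcal{D}^\perp = (\overline{\mathcal{A}+\mathcal{B}})^\perp = \ker d_A\cap\ker d_B$.

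For the central identity (i), I would observe that $\psi = d_A^* f\in\mathcal{A}$ lies in $\mathcal{B}$ if and only if $\Pi_\mathcal{B}\psi = \psi$, equivalently $\|d_B\psi\| = \|\psi\|$. Since $d_B d_A^* = d_AS d_A^* = T$ and $\|d_A^* f\| = \|f\|$, this reduces to $\|Tf\|^2 = \|f\|^2$, i.e., $\langle f,(I - T^2)f\rangle = 0$. Because $\|T\|\le 1$ makes $I - T^2$ positive, the condition is equivalent to $f\in\ker(T^2-1)$. Hence $\mathcal{A}\cap\mathcal{B} = d_A^*\ker(T^2-1)$, and the symmetric computation starting from $\psi = d_B^* g$ gives the companion identity $\mathcal{A}\cap\mathcal{B} = d_B^*\ker(T^2-1)$.

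For (ii), the isometry of $d_A^*$ yields the orthogonal decomposition $\mathcal{A} = d_A^*\ker(T^2-1)\oplus d_A^*\ker(T^2-1)^\perp$, and similarly for $\mathcal{B}$; by (i), the first summand on each side equals $\mathcal{D}_0$. Applying the continuous projection $I - \Pi_{\mathcal{D}_0}$ to $\mathcal{A}+\mathcal{B}$ therefore annihilates these first summands, giving the equality
\[
(I - \Pi_{\mathcal{D}_0})(\mathcal{A}+\mathcal{B}) = d_A^*\ker(T^2-1)^\perp + d_B^*\ker(T^2-1)^\perp.
\]
Since $\mathcal{D}_1 = (I - \Pi_{\mathcal{D}_0})\mathcal{D}$ and $\mathcal{D} = \overline{\mathcal{A}+\mathcal{B}}$, taking closures yields (ii).

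For invariance, the identities $Sd_A^* = d_B^*$, $Sd_B^* = d_A^*$, $Cd_A^* f = d_A^* f$, and $Cd_B^* g = 2d_A^* Tg - d_B^* g$ give the explicit formula $U(d_A^* f + d_B^* g) = -d_A^* g + d_B^*(f + 2Tg)$. Hence $U(\mathcal{A}+\mathcal{B})\subseteq\mathcal{A}+\mathcal{B}$, and continuity gives $U\mathcal{D}\subseteq\mathcal{D}$. For $\psi\in\mathcal{D}^\perp = \ker d_A\cap\ker d_B$ one checks $C\psi = -\psi$ and $S\psi\in\mathcal{D}^\perp$ (using $d_AS\psi = d_B\psi = 0$ and $d_BS\psi = d_A\psi = 0$), so $U\psi\in\mathcal{D}^\perp$. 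On $\mathcal{D}_0\subseteq\mathcal{A}$ the coin acts trivially, so $U|_{\mathcal{D}_0} = S|_{\mathcal{D}_0}$, and $S$ visibly preserves $\mathcal{A}\cap S\mathcal{A} = \mathcal{D}_0$. Unitarity of $U$ upgrades all these inclusions to equalities, and $U\mathcal{D}_1\subseteq\mathcal{D}_1$ then follows because $\mathcal{D}_1$ is the $\mathcal{D}_0$-orthocomplement inside the invariant subspace $\mathcal{D}$. The step I expect to require the most care is (ii), where one must verify that the characterization of $(I - \Pi_{\mathcal{D}_0})(\mathcal{A}+\mathcal{B})$ interacts cleanly with the closure defining $\mathcal{D}$; everything else reduces to one-line verifications once $d_Bd_A^* = T$ is in hand.
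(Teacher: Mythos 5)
Your argument is correct. Note that the paper itself gives no proof of this proposition --- it is stated ``without proof'' with a pointer to \cite{SS15} (where $\mathcal{D}$, $\mathcal{D}_0$, $\mathcal{D}_1$ appear as $\overline{\mathcal{L}}$, $\mathcal{L}$, $\overline{\mathcal{L}_1}$) --- so there is nothing in this paper to compare against; your write-up supplies a self-contained derivation from the identities $d_Bd_A^*=d_Ad_B^*=T$, $Sd_A^*=d_B^*$, $Cd_A^*=d_A^*$, and $Cd_B^*=2d_A^*T-d_B^*$, all of which check out. The one step you flag as delicate, passing the closure through $I-\Pi_{\mathcal{D}_0}$ in (ii), does go through: since $\mathcal{D}_0\subseteq\mathcal{D}$, the image $(I-\Pi_{\mathcal{D}_0})\mathcal{D}=\mathcal{D}\cap\mathcal{D}_0^\perp=\mathcal{D}_1$ is closed, so $(I-\Pi_{\mathcal{D}_0})\overline{\mathcal{A}+\mathcal{B}}$ coincides with $\overline{(I-\Pi_{\mathcal{D}_0})(\mathcal{A}+\mathcal{B})}$; it would be worth making that sentence explicit.
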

By Proposition \ref{prop42910:04},  $U$ is decomposed as
\begin{equation}
\label{Udecomp} 
U = U_{{\mathcal{D}_1}} 
		\oplus U_{\mathcal{D}_0}  \oplus U_{\mathcal{D}^\perp}. 
\end{equation}
Since $\ker(T^2-1) = \ker(T-1) \oplus \ker(T+1)$,
we know that
\begin{align*} 
\mathcal{D}_0 = \mathcal{D}_0^+ \oplus \mathcal{D}_0^- , 
\end{align*}
where $\mathcal{D}_0^\pm = d_A^*\ker(T\mp 1)$.
We also have
\[ \mathcal{D}^\perp
	=\mathcal{D}^\perp_+ \oplus \mathcal{D}^\perp_-, \]
where $\mathcal{D}_\pm^\perp:= \mathcal{D}^\perp \cap \ker (S\mp 1)$.
{
By Proposition \ref{prop42910:04} (iii),
we have \eqref{Dpmperp}.
}
The following  is essentially proved in \cite{SS15}.
\begin{proposition}
\label{prop0409}
{\rm
Let $M_\pm = {\rm dim}\mathcal{D}^\perp_\pm$. 
\begin{itemize}
\item[(1)] $\ker(U \mp 1) = \mathcal{D}_0^\pm \oplus \mathcal{D}_\pm^\perp$
	and $\ker(U^2 - 1)^\perp = \mathcal{D}_1$;
\item[(2)] $U_{\mathcal{D}_0} = I_{\mathcal{D}_0^+} \oplus (- I_{\mathcal{D}_0^-})$
		and $U_{\mathcal{D}^\perp} = I_{\mathcal{D}_+^\perp} 
			\oplus (-I_{\mathcal{D}_-^\perp})$.
\end{itemize}
}
\end{proposition}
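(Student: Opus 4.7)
My plan is to exploit the $U$-invariant orthogonal decomposition $\mathcal{H} = \mathcal{D}_1 \oplus \mathcal{D}_0 \oplus \mathcal{D}^\perp$ furnished by Proposition \ref{prop42910:04}, analyzing $U$ on each summand separately. Parts (1) and (2) will both follow once I (i) compute $U$ pointwise on $\mathcal{D}_0$ and $\mathcal{D}^\perp$ and (ii) show that every $\pm 1$-eigenvector of $U$ already lies in $\mathcal{D}_0 \oplus \mathcal{D}^\perp$, so that $\ker(U^2-1)^\perp = \mathcal{D}_1$ by orthogonality.

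For $U_{\mathcal{D}^\perp}$, I would first verify that $\mathcal{D}^\perp = \ker d_A \cap \ker d_B$ is $S$-invariant, since $\psi$ in this intersection satisfies $d_A(S\psi) = d_B\psi = 0$ and $d_B(S\psi) = d_A\psi = 0$. The involution $S|_{\mathcal{D}^\perp}$ then splits $\mathcal{D}^\perp$ into its $\pm 1$-eigenspaces, which a short check identifies with $\mathcal{D}_\mp^\perp$. Since $\mathcal{D}^\perp \subset \ker d_A \subset \mathcal{A}^\perp$, one has $C|_{\mathcal{D}^\perp} = -I$, hence $U_{\mathcal{D}_+^\perp} = I$ and $U_{\mathcal{D}_-^\perp} = -I$. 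For $U_{\mathcal{D}_0}$, using $\mathcal{D}_0 = d_A^*\ker(T^2-1)$ from Proposition \ref{prop42910:04}(i), the core calculation is
\[ U d_A^* f = SCd_A^* f = S d_A^* f = d_B^* f, \]
combined with the identity $\|d_B^* f \mp d_A^* f\|^2 = 2\|f\|^2 \mp 2\langle f, Tf\rangle$, which follows from $d_A d_B^* = d_B d_A^* = T$ and the coisometry property of $d_A$, $d_B$. This quantity vanishes exactly when $f \in \ker(T\mp 1)$, yielding $U d_A^* f = \pm d_A^* f$ on $\mathcal{D}_0^\pm$. Together these two calculations prove (2) and establish the inclusion $\mathcal{D}_0^\pm \oplus \mathcal{D}_\pm^\perp \subset \ker(U\mp 1)$.

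The reverse inclusion is the heart of the argument. Starting from $\psi \in \ker(U-1)$, I would apply $d_A$ and $d_B$ to the equation $SC\psi = \psi$ and simplify using $d_B = d_A S$, $C = 2d_A^* d_A - 1$, and $d_A d_B^* = d_B d_A^* = T$. The outcome should be the two identities $d_A\psi = d_B\psi$ and $d_B\psi = (2T-1)d_A\psi$, which jointly force $d_A\psi \in \ker(T-1)$. Setting $f = d_A\psi$ and $\phi = \psi - d_A^* f$, direct computation gives $d_A\phi = 0$ and $d_B\phi = f - Tf = 0$, so $\phi \in \mathcal{D}^\perp$. From the preceding paragraph $U d_A^* f = d_A^* f$, hence $U\phi = \phi$; combined with $U|_{\mathcal{D}^\perp} = -S|_{\mathcal{D}^\perp}$ this forces $S\phi = -\phi$, placing $\phi$ in $\mathcal{D}_+^\perp$. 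Thus $\psi = d_A^* f + \phi \in \mathcal{D}_0^+ \oplus \mathcal{D}_+^\perp$, and a mirrored argument handles $\ker(U+1)$. Summing yields $\ker(U^2-1) = \mathcal{D}_0 \oplus \mathcal{D}^\perp$, and taking orthogonal complements inside $\mathcal{H} = \mathcal{D}_1 \oplus \mathcal{D}_0 \oplus \mathcal{D}^\perp$ delivers $\ker(U^2-1)^\perp = \mathcal{D}_1$.

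The main obstacle I anticipate is the bookkeeping in the reverse inclusion: specifically, verifying that the two intertwining identities $d_A\psi = d_B\psi$ and $d_B\psi = (2T-1)d_A\psi$ do drop out cleanly from $SC\psi = \psi$, and then tracking parities so that the residual vector $\phi$ lands in $\mathcal{D}_+^\perp$ (rather than $\mathcal{D}_-^\perp$) when $\psi \in \ker(U-1)$. The remainder is a routine unpacking of the definitions of $d_A$, $d_B$, $C$, $S$, and $T$.
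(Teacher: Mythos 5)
Your proof is correct, and it is worth noting that the paper itself offers no proof of this proposition: it is stated with the remark that it is ``essentially proved in \cite{SS15}''. So your argument is not a variant of the paper's proof but a self-contained replacement for the citation, using only Proposition \ref{prop42910:04} (which the paper also imports from \cite{SS15}) together with the definitions. Every step checks out: $Cd_A^*=d_A^*$ and $Sd_A^*=d_B^*$ give $Ud_A^*f=d_B^*f$; the norm identity $\|d_B^*f\mp d_A^*f\|^2=2\langle f,(1\mp T)f\rangle$ correctly localizes the fixed vectors to $f\in\ker(T\mp1)$ because $1\mp T\ge 0$; applying $d_A$ and $d_B$ to $SC\psi=\pm\psi$ does yield exactly the two intertwining identities you predict (via $d_BC=2Td_A-d_B$ and $d_BS=d_A$, $d_AC=d_A$), forcing $d_A\psi\in\ker(T\mp1)$; and the residual $\phi=\psi-d_A^*d_A\psi$ lands in $\ker d_A\cap\ker d_B$ with the correct $S$-parity since $U=-S$ there. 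One caution on bookkeeping: the paper is internally inconsistent about the sign convention for $\mathcal{D}_\pm^\perp$ (the displayed definition \eqref{Dpmperp} and the introduction say $\mathcal{D}_\pm^\perp=\mathcal{A}^\perp\cap\ker(S\pm1)$, while the line just before Proposition \ref{prop0409} writes $\mathcal{D}^\perp\cap\ker(S\mp1)$); you have adopted the convention of \eqref{Dpmperp}, which is the only one under which $U$ acts as $+1$ on $\mathcal{D}_+^\perp$ and hence the only one under which the proposition is true as stated. Your proof can stand as written.
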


\section{Main results}
\label{sec1.3}
Let $U = S(2 d_A^* d_A -1)$ be an evolution
associated with a boundary operator $d_A :\mathcal{H} \to \mathcal{K}$
and a shift operator $S$ on $\mathcal{H}$.
As will be seen in Section \ref{sectiongenerator},
the operators 
\begin{align*}
d_+ = \frac{1}{\sqrt{2(1-T^2)} }(d_A -e^{-i\vartheta(T)} d_B), \quad
	d_- = \frac{1}{\sqrt{2(1-T^2)} }(e^{-i\vartheta(T)} d_A - d_B),
\end{align*} 
can be extended to bounded operators,
where $T$ is the discriminant of $U$
and 
$\vartheta:[-1,1] \to  [0,\pi]$
is given by
$\vartheta(\lambda) = \arccos \lambda$.
{
\begin{theorem}
\label{mainthm01}
{\rm
Let $U$, $d_\pm$ and $T$ be as above.
Then, $\mathcal{H}$ is decomposed as
\begin{equation}\label{ibuki} \mathcal{H} 
	= {\rm Ran}(d_+^* d_+) \oplus {\rm Ran}(d_-^* d_-)
		\oplus \ker(U-1) \oplus \ker(U+1) \end{equation}
and the generator $H$ of  $U$ is given by
\begin{equation}
\label{eqHform}  
H
	= \vartheta(d_+^* T  d_+) 
		\oplus (2\pi -\vartheta(d_-^* T d_-)) \oplus 0 \oplus \pi,
\end{equation}
where 
\[ \ker(U \mp 1)
	= d_A^* \ker (T \mp 1) \oplus \mathcal{D}_\pm^\perp.
	\] 
}
\end{theorem}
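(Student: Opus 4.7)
The plan is to build on the orthogonal decomposition
\begin{equation*} \mathcal{H}=\mathcal{D}_1\oplus\ker(U-1)\oplus\ker(U+1) \end{equation*}
from Proposition~\ref{prop0409}(1), and to exhibit $\mathcal{D}_1=\mathrm{Ran}(d_+^*d_+)\oplus\mathrm{Ran}(d_-^*d_-)$. The remaining identities $\ker(U\mp 1)=d_A^*\ker(T\mp 1)\oplus\mathcal{D}_\pm^\perp$ are read off from Proposition~\ref{prop0409}(1) together with $\mathcal{D}_0^\pm=d_A^*\ker(T\mp 1)$, so the substantive content concerns the action of $U$ on $\mathcal{D}_1=\ker(U^2-1)^\perp$.

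To split $\mathcal{D}_1$, I would first check orthogonality by computing $d_+d_-^*$ directly on $\ker(T^2-1)^\perp$: expanding the product and using $d_A d_A^*=d_B d_B^*=I_{\mathcal{K}}$, $d_A d_B^*=d_B d_A^*=T$, and the functional identity $e^{i\vartheta(T)}+e^{-i\vartheta(T)}=2T$ makes all four terms cancel. The inclusion $\mathrm{Ran}(d_\pm^*)\subseteq\mathcal{D}_1$ is immediate from the defining formulas for $d_\pm^*$ together with the characterization $\mathcal{D}_1=\overline{d_A^*\ker(T^2-1)^\perp+d_B^*\ker(T^2-1)^\perp}$ in Proposition~\ref{prop42910:04}(ii). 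For the reverse inclusion, given $f,g\in\ker(T^2-1)^\perp$, I would solve $d_+^*\alpha+d_-^*\beta=d_A^*f+d_B^*g$ by collecting $d_A^*$- and $d_B^*$-components; this reduces to a $2\times 2$ linear system over functions of $T$ whose determinant $1-e^{2i\vartheta(T)}=2\sqrt{1-T^2}\,e^{-i\vartheta(T)}$ is invertible on $\ker(T^2-1)^\perp$. Taking closures then yields \eqref{ibuki}.

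For the generator, the crux is the intertwining
\begin{equation*} Ud_\pm^*=d_\pm^*\,e^{\pm i\vartheta(T)} \quad\text{on}\quad \ker(T^2-1)^\perp. \end{equation*}
I would derive it from the two operator identities $Ud_A^*=d_B^*$ and $Ud_B^*=2d_B^*T-d_A^*$ (both immediate from $U=S(2d_A^*d_A-1)$, $Sd_A^*=d_B^*$, and $d_A d_B^*=T$), combined with $2Te^{\pm i\vartheta(T)}=e^{\pm 2i\vartheta(T)}+1$. Since $d_\pm$ extends to a unitary $\mathrm{Ran}(d_\pm^*d_\pm)\to\ker(T^2-1)^\perp$ by part~(1) of the introduction, this intertwining transports to $U|_{\mathrm{Ran}(d_\pm^*d_\pm)}=e^{\pm i\vartheta(d_\pm^*Td_\pm)}$. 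Combined with the trivial generators $0$ on $\ker(U-1)$ and $\pi$ on $\ker(U+1)$, and with $e^{-i\vartheta}=e^{i(2\pi-\vartheta)}$ to shift the $d_-$ spectrum into $[\pi,2\pi]$, this assembles into \eqref{eqHform}. The main obstacle I anticipate is the normalization condition $E_H([0,2\pi))=I$ on the $d_-$ piece: one must check that $\{2\pi\}$ carries no mass, which reduces to $1\notin\sigma_{\rm p}(T|_{\ker(T^2-1)^\perp})$ and is automatic from the definition of the complement.
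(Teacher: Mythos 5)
Your proposal follows essentially the same route as the paper: the orthogonality of the ranges via $(d_A-e^{-i\vartheta(T)}d_B)(d_A^*e^{i\vartheta(T)}-d_B^*)=2\cos\vartheta(T)-2T=0$, the identification of $\mathcal{D}_1$ by solving the $2\times 2$ system over functions of $T$ on a dense subspace and closing, and the intertwining $Ud_\pm^*=d_\pm^*e^{\pm i\vartheta(T)}$ from $Ud_A^*=d_B^*$, $Ud_B^*=2d_B^*T-d_A^*$ are exactly the steps in Lemma \ref{lemmaDpm} and Theorem \ref{Mainthm04}. The only blemish is the cosmetic slip $1-e^{2i\vartheta(T)}=2\sqrt{1-T^2}\,e^{-i\vartheta(T)}$ (the correct value is $-2i\,e^{i\vartheta(T)}\sqrt{1-T^2}$), which differs by a unimodular factor and does not affect the solvability argument; your explicit check that $E_H(\{2\pi\})=0$ is in fact slightly more careful than the paper's.
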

}
By this theorem, $U$ is expressed by 
\begin{equation}
\label{eq822} 
U=e^{i\vartheta(d_+^*Td_+)}\oplus e^{-i\vartheta(d_-^*Td_-)} \oplus 1\oplus (-1) 
\end{equation}
under the decomposition of (\ref{ibuki}). 
We consider the iteration of $U$, $\psi_0\stackrel{U}{\to}\psi_1\stackrel{U}{\to}\psi_2\stackrel{U}{\to}\cdots$.
From \eqref{eq822}, we obtain 
the following temporal and spatial discrete analog of the wave equation.
\begin{corollary}
\label{mainthm01.5}
{\rm
Let $\psi_0\in \mathrm{Ran}(d_+^*d_+)$
and $f_n=d_+\psi_n$. Then, 
	\[ \frac{1}{2}\left(f_{n+1}+f_{n-1}\right)=Tf_n. \]
}
\end{corollary}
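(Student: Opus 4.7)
The plan is to reduce everything to the functional calculus on $\mathrm{Ran}(d_+^*d_+)$ made available by Theorem \ref{mainthm01}, and then exploit the unitary intertwining property of $d_+$ stated in item (1) of the introduction.

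First I would observe that, by Theorem \ref{mainthm01}, the subspace $\mathrm{Ran}(d_+^*d_+)$ is a reducing subspace for $U$, and the restriction $U\!\upharpoonright_{\mathrm{Ran}(d_+^*d_+)}$ equals $e^{i\vartheta(d_+^*Td_+)}$. Since $\psi_0\in\mathrm{Ran}(d_+^*d_+)$, an easy induction gives $\psi_n=U^n\psi_0\in\mathrm{Ran}(d_+^*d_+)$ for every $n\in\mathbb{Z}$, and
\[
\psi_{n+1}=e^{i\vartheta(d_+^*Td_+)}\psi_n,\qquad \psi_{n-1}=e^{-i\vartheta(d_+^*Td_+)}\psi_n.
\]

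The next step is the key intertwining identity. By item (1) of the introduction, the extension of $d_+$ is a unitary operator from $\mathrm{Ran}(d_+^*d_+)$ onto $\ker(T^2-1)^\perp$; equivalently $d_+^*d_+$ acts as the identity on $\mathrm{Ran}(d_+^*d_+)$ and $d_+d_+^*$ is the orthogonal projection onto $\ker(T^2-1)^\perp$. Therefore
\[
d_+\bigl(d_+^*Td_+\bigr)^k = T^k d_+\qquad\text{on }\mathrm{Ran}(d_+^*d_+)
\]
for every $k\in\mathbb{N}$, and, by the Borel functional calculus applied to the bounded self-adjoint operators $d_+^*Td_+$ and $T$ (restricted to $\ker(T^2-1)^\perp$), one obtains
\[
d_+\, g(d_+^*Td_+) = g(T)\, d_+\qquad\text{on }\mathrm{Ran}(d_+^*d_+)
\]
for any Borel function $g$ bounded on the spectrum. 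Applying this to $g(\lambda)=e^{\pm i\vartheta(\lambda)}$ yields
\[
f_{n+1}=d_+\psi_{n+1}=e^{i\vartheta(T)}f_n,\qquad f_{n-1}=d_+\psi_{n-1}=e^{-i\vartheta(T)}f_n.
\]

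Finally I would add the two identities and use $\vartheta(\lambda)=\arccos\lambda$, so that
\[
\tfrac12\bigl(e^{i\vartheta(T)}+e^{-i\vartheta(T)}\bigr)=\cos\vartheta(T)=T,
\]
again by functional calculus, which gives the claim $\tfrac12(f_{n+1}+f_{n-1})=Tf_n$.

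The only delicate point is the intertwining identity $d_+\,g(d_+^*Td_+)=g(T)d_+$: one must take care that the functional calculus really transfers under the partial isometry $d_+$. This is where I would invoke, explicitly, that $d_+$ extended to $\mathrm{Ran}(d_+^*d_+)$ is unitary onto $\ker(T^2-1)^\perp$ (so the spectra match after restriction), which turns the identity into the standard statement that functional calculus commutes with unitary conjugation. Once this is in place, the rest is a two-line computation.
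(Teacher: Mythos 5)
Your proposal is correct and follows essentially the same route as the paper: the paper's proof likewise uses that $\psi_0\in\mathrm{Ran}(d_+^*d_+)=\mathcal{D}_1^+$ gives $f_n=d_+U^n\psi_0=e^{in\vartheta(T)}d_+\psi_0$ (i.e.\ the intertwining $d_+\,e^{i\vartheta(d_+^*Td_+)}=e^{i\vartheta(T)}d_+$ established in the proof of Theorem \ref{Mainthm04}), and then concludes from $\tfrac12(e^{i\vartheta(T)}+e^{-i\vartheta(T)})=\cos\vartheta(T)=T$. Your extra care about transferring the functional calculus through the unitary $d_+:\mathcal{D}_1^+\to\ker(T^2-1)^\perp$ is exactly the content of the identity $d_\pm^*e^{\pm i\vartheta(T)}d_\pm=e^{\pm i\vartheta(d_\pm^*Td_\pm)}$ already proved in the paper.
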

Moreover, we obtain the following corollary, 
which is important for discriminating the localization of QW 
under the time evolution $U$. 
\begin{corollary}
\label{mainthm02}
{\rm
Let $U$, $d_\pm$, $T$ and $H$ be as in Theorem \ref{mainthm01}.
Then, 
\begin{align*}
& \mathcal{H}_{\rm p}(H) 
	= d_+^* \mathcal{H}_{\rm p}^T \oplus d_-^* \mathcal{H}_{\rm p}^T 
		\oplus \ker (U^2-1), \\
& \mathcal{H}_{\sharp}(H) 
	=  d_+^* \mathcal{H}_{\sharp}(T)  \oplus d_-^* \mathcal{H}_{\sharp}(T),
	\quad \sharp = {\rm c, ac, sc}
\end{align*}
where $\mathcal{H}_{\rm p}^T := \mathcal{H}_{\rm p}(T )\cap \ker(T^2-1)^\perp$.
}
\end{corollary}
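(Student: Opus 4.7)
The plan is to deduce Corollary \ref{mainthm02} directly from Theorem \ref{mainthm01} together with the unitarity statement (1) of the introduction, with essentially no new computation. I would first invoke the explicit form \eqref{eqHform}: under the decomposition \eqref{ibuki}, $H$ splits as $\vartheta(d_+^*Td_+) \oplus (2\pi-\vartheta(d_-^*Td_-)) \oplus 0 \oplus \pi$. The last two summands act as scalar operators on $\ker(U\mp 1)$, so $\ker(U^2-1) \subset \mathcal{H}_{\rm p}(H)$ and this block contributes nothing to $\mathcal{H}_\sharp(H)$ for $\sharp \in \{c, ac, sc\}$. It therefore remains to analyze the two operator blocks $\vartheta(d_\pm^* T d_\pm)$ and $2\pi - \vartheta(d_-^* T d_-)$ living on $\mathrm{Ran}(d_\pm^* d_\pm)$.

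The central step is the unitary intertwining
\[ d_\pm \,(d_\pm^* T d_\pm)\, d_\pm^* \;=\; T|_{\ker(T^2-1)^\perp}. \]
This follows because $d_\pm$ extends to a unitary from $\mathrm{Ran}(d_\pm^* d_\pm)$ onto $\ker(T^2-1)^\perp$ with inverse $d_\pm^*$ (part (1) of the main results), and because $\ker(T^2-1)^\perp$ reduces the self-adjoint operator $T$. Since $\vartheta(\lambda)=\arccos\lambda$ is a homeomorphism from $[-1,1]$ onto $[0,\pi]$, and $2\pi-\vartheta$ is a homeomorphism from $[-1,1]$ onto $[\pi,2\pi]$, the Borel functional calculus preserves each spectral type, so for every $\sharp \in \{p, c, ac, sc\}$,
\[ \mathcal{H}_\sharp\bigl(\vartheta(d_\pm^* T d_\pm)\bigr) \;=\; d_\pm^* \,\mathcal{H}_\sharp\bigl(T|_{\ker(T^2-1)^\perp}\bigr), \]
and the identical identity holds with $\vartheta$ replaced by $2\pi - \vartheta$ on the $d_-$ block.

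To finish, I would use $T = T|_{\ker(T^2-1)} \oplus T|_{\ker(T^2-1)^\perp}$, whose first summand has pure point spectrum $\{\pm 1\}$. This yields $\mathcal{H}_\sharp(T|_{\ker(T^2-1)^\perp}) = \mathcal{H}_\sharp(T)$ for $\sharp\in\{c,ac,sc\}$, while $\mathcal{H}_{\rm p}(T|_{\ker(T^2-1)^\perp}) = \mathcal{H}_{\rm p}(T)\cap\ker(T^2-1)^\perp = \mathcal{H}_{\rm p}^T$ by definition. Reassembling the contributions along \eqref{ibuki} and adding the block $\ker(U^2-1)$ (for the point part only) produces both claimed identities. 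The only real subtlety is the intertwining displayed above, which is immediate once one remembers that $d_\pm d_\pm^*$ acts as the identity on $\ker(T^2-1)^\perp$; everything else is spectral-type invariance under a homeomorphic functional calculus.
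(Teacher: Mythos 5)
Your proposal is correct and follows essentially the same route as the paper: both arguments rest on the block form of $H$ from Theorem \ref{mainthm01}, the unitarity of $d_\pm:\mathrm{Ran}(d_\pm^*d_\pm)\to\ker(T^2-1)^\perp$, and the injectivity of $\vartheta_\pm$ — the paper just handles the point part by matching eigenspaces $\ker(H-\vartheta_\pm(\lambda))=d_\pm^*\ker(T-\lambda)$ explicitly and gets the continuous part by orthocomplementation, whereas you treat all spectral types uniformly via the intertwining. One small caveat: a homeomorphism alone does not preserve absolute or singular continuity (Cantor-type homeomorphisms destroy it); what you actually need is that $\arccos$ maps Lebesgue-null sets to Lebesgue-null sets in both directions, which holds here because it is locally bi-Lipschitz on $(-1,1)$ and the endpoints are negligible.
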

Combining Corollary \ref{mainthm02} with Proposition \ref{prop_loc}, 
we have the following criterion for localization.
\begin{corollary}
\label{mainthm03}
{\rm
Let $U =S(2d_A^* d_A -1)$ and $H$ be as in Theorem \ref{mainthm01}.
Assume that there exists a family $\{\mathcal{H}_v\}_{v \in V}$ of Hilbert spaces
such that $(U, \{\mathcal{H}_v\}_{v \in V}) \in \mathscr{F}_{\rm QW}$
and ${\rm dim}\mathcal{H}_v < \infty$ ($v \in V$). 
Then,
\begin{itemize}
\item[(1)] Localization occurs for some initial state $\Psi_0$
	if and only if $\sigma_{\rm p}(T) \not=\emptyset$ 
		or $\mathcal{D}^\perp \not=\emptyset$.
\item[(2)] If $T$ has a complete set of eigenstates,
	localization occurs for any initial state $\Psi_0$.
\end{itemize}
}
\end{corollary}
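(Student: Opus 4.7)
The plan is to translate the characterization of $\mathcal{H}_{\rm p}(H)$ given by Corollary \ref{mainthm02} into a condition on $T$ and $\mathcal{D}^\perp$, and then couple this with Proposition \ref{prop_loc}. First I would record that, by Proposition \ref{prop0409} together with Proposition \ref{prop42910:04},
\[ \ker(U^2-1) = \mathcal{D}_0\oplus\mathcal{D}^\perp = d_A^*\ker(T^2-1)\oplus\mathcal{D}^\perp. \]
Inserting this into the formula for $\mathcal{H}_{\rm p}(H)$ from Corollary \ref{mainthm02}, and noting that $d_A^*$ is an isometry on $\mathcal{K}$ while $d_\pm^*$ are partial isometries with initial space $\ker(T^2-1)^\perp$, one finds that $\mathcal{H}_{\rm p}(H)\neq\{0\}$ is equivalent to the existence of either an eigenvalue of $T$ in $(-1,1)$ (contributing to $\mathcal{H}_{\rm p}^T$), an eigenvalue $\pm 1$ of $T$ (contributing to $\ker(T^2-1)$), or a nontrivial $\mathcal{D}^\perp$; the first two together amount to $\sigma_{\rm p}(T)\neq\emptyset$.

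For the ``if'' direction of (1), I would pick any normalized $\Psi_0\in\mathcal{H}_{\rm p}(H)$; since $\mathcal{H}_{\rm p}(H)\perp\mathcal{H}_{\rm sc}(H)$, this $\Psi_0$ lies in $\mathcal{H}_{\rm sc}(H)^\perp$ and satisfies $P_{\rm p}(H)\Psi_0=\Psi_0\neq 0$, so Proposition \ref{prop_loc} produces localization. The converse is the contrapositive: if both $\sigma_{\rm p}(T)=\emptyset$ and $\mathcal{D}^\perp=\{0\}$, then $\mathcal{H}_{\rm p}(H)=\{0\}$, so for every $\Psi_0\in\mathcal{H}_{\rm sc}(H)^\perp$ we have $P_{\rm p}(H)\Psi_0=0$, and Proposition \ref{prop_loc} excludes localization. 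For (2), completeness of $T$'s eigenbasis means $\mathcal{K}=\mathcal{H}_{\rm p}(T)$, hence $\mathcal{H}_\sharp(T)=\{0\}$ for $\sharp=\mathrm{c,ac,sc}$; by Corollary \ref{mainthm02} this forces $\mathcal{H}_\sharp(H)=\{0\}$ as well, so $\mathcal{H}=\mathcal{H}_{\rm p}(H)\subseteq\mathcal{H}_{\rm sc}(H)^\perp$, and every normalized $\Psi_0$ satisfies $P_{\rm p}(H)\Psi_0=\Psi_0\neq 0$, giving localization via Proposition \ref{prop_loc}.

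The only delicate point is the ``only if'' direction of (1): Proposition \ref{prop_loc} is an equivalence only under the hypothesis $\Psi_0\in\mathcal{H}_{\rm sc}(H)^\perp$, whereas (1) quantifies over all initial states. Since Corollary \ref{mainthm02} identifies $\mathcal{H}_{\rm sc}(H)=d_+^*\mathcal{H}_{\rm sc}(T)\oplus d_-^*\mathcal{H}_{\rm sc}(T)$, which may be nonzero even when $\sigma_{\rm p}(T)=\emptyset$ and $\mathcal{D}^\perp=\{0\}$, strictly covering arbitrary $\Psi_0$ would require an additional argument ruling out $\limsup$-localization from singular-continuous components. The cleanest way to finesse this, if needed, is to interpret the statement with the natural restriction $\Psi_0\in\mathcal{H}_{\rm sc}(H)^\perp$ built into Proposition \ref{prop_loc}, under which the biconditional is tight.
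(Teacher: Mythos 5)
Your argument is correct and follows essentially the same route as the paper's proof, which simply invokes Corollary \ref{mainthm02} to conclude $\sigma_{\rm p}(H)\neq\emptyset$ and then applies Proposition \ref{prop_loc} to an initial state in $\mathcal{H}_{\rm sc}(H)^\perp$ overlapping $\mathcal{H}_{\rm p}(H)$, and handles (2) exactly as you do via $\mathcal{H}=\mathcal{H}_{\rm p}(H)$ and $\sigma_{\rm sc}(H)=\emptyset$. The caveat you raise about the ``only if'' direction of (1) is well taken but is not a defect of your write-up relative to the source: the paper's own proof is equally silent on initial states with a nonzero singular continuous component (for which only the Ces\`aro average is controlled by Proposition \ref{proplocal}), so the biconditional is really established only modulo the hypothesis $\Psi_0\in\mathcal{H}_{\rm sc}(H)^\perp$ built into Proposition \ref{prop_loc}, exactly as you observe.
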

\begin{proof}
In the case of (1), we know from Corollary \ref{mainthm02}
that $\sigma_{\rm p}(H) \not = \emptyset$.
By Proposition \ref{prop_loc},
we need only take the initial state $\Psi_0 \in \mathcal{H}_{sc}(H)^\perp$ 
overlapping with $\mathcal{H}_{p}(H)$.
In the case of (2), $\sigma_{\rm sc}(H) = \emptyset$
and any initial state $\Psi_0$ overlaps with $\mathcal{H}_{p}(H)$.
\end{proof}

\section{Generator of an evolution}
\label{sectiongenerator}
In this section, we prove Theorem \ref{mainthm01} and Corollary \ref{mainthm02}.
We begin with the precise definition of notations.
\subsection{Definition and properties of $d_\pm$}
\label{sec.3.1}
Let $\vartheta:[-1,1] \to [0,\pi]$ be a function defined by
\[ \vartheta(\lambda) = \arccos \lambda, \quad \lambda \in [-1,1]. \]
Because $\sigma(T)\subseteq [-1,1]$, 
\[ \cos \vartheta (T) = T, \quad \sin \vartheta(T) = \sqrt{1-T^2},
	\quad e^{\pm i \vartheta(T)} = T \pm i \sqrt{1-T^2}. \]
Note that
$\ker (T^2-1) = \ker \sqrt{1-T^2}$ and 
$\ker(T^2-1)^\perp = \overline{{\rm Ran}\sqrt{1-T^2}}$.
We first define operators
$d_\pm^\dagger:\mathrm{Ran}(T^2-1) \to \mathcal{D}_1$
as follows:
for $f \in {\rm Ran}(T^2-1)$,
\begin{align*} 
d_+^\dagger f
	= (d_A^* - d_B^* e^{i\vartheta(T)}) \frac{1}{\sqrt{2(1-T^2)}}f, \quad
d_-^\dagger f
	= (d_A^* e^{i\vartheta(T)} - d_B^* ) \frac{1}{\sqrt{2(1-T^2)}}f.
\end{align*} 
Because
$\frac{1}{\sqrt{2(1-T^2)}}f \in {\rm Ran}\sqrt{1-T^2}$
for all $f \in {\rm Ran}(1-T^2)$,
we know that  $d_\pm^\dagger f \in \mathcal{D}_1$.
\begin{lemma}
\label{0401lemma01}
{\rm
$d_\pm^\dagger$ are isometries from ${\rm Ran}(T^2-1)$ to $\mathcal{D}_1$.
}
\end{lemma}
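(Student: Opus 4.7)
The plan is to verify the two defining properties of an isometry into $\mathcal{D}_1$: (a) $d_\pm^\dagger$ sends $\mathrm{Ran}(T^2-1)$ into $\mathcal{D}_1$, and (b) $\|d_\pm^\dagger f\|=\|f\|$. For both parts the calculations hinge on three identities established earlier: $d_Ad_A^*=d_Bd_B^*=I_\mathcal{K}$, $d_Ad_B^*=T$ with $T=T^*$, and $e^{i\vartheta(T)}+e^{-i\vartheta(T)}=2T$.

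For (b), I would set $g=\frac{1}{\sqrt{2(1-T^2)}}f$ so that $d_+^\dagger f=d_A^*g-d_B^*e^{i\vartheta(T)}g$, and expand the squared norm into four inner products. The two diagonal terms each contribute $\|g\|^2$, since $d_A^*$ and $d_B^*$ are isometries and $e^{i\vartheta(T)}$ is unitary on $\mathcal{K}$. The two cross terms collapse, via $d_Ad_B^*=T$ and $d_Bd_A^*=T^*=T$, to
\[
\langle g,(Te^{i\vartheta(T)}+e^{-i\vartheta(T)}T)g\rangle=\langle g,T(e^{i\vartheta(T)}+e^{-i\vartheta(T)})g\rangle=2\langle g,T^2g\rangle.
\]
Combining, $\|d_+^\dagger f\|^2=2\langle g,(1-T^2)g\rangle=\|\sqrt{2(1-T^2)}\,g\|^2=\|f\|^2$. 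The analogous computation for $d_-^\dagger$ gives the same cross-term cancellation and hence the same norm identity.

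For (a), note first that $d_\pm^\dagger f\in\mathcal{A}+\mathcal{B}\subset\mathcal{D}$ is automatic from the defining formulas, so only orthogonality to $\mathcal{D}_0$ needs checking. Using $\mathcal{D}_0=d_A^*\ker(T^2-1)$ from Proposition \ref{prop42910:04}(i), I would pair $d_+^\dagger f$ with $d_A^*h$ for arbitrary $h\in\ker(T^2-1)$; via $d_Ad_A^*=I_\mathcal{K}$ and $d_Ad_B^*=T$ this reduces to $\langle h,(1-Te^{i\vartheta(T)})g\rangle$. Since $h\in\ker\sqrt{1-T^2}$ while $g\in\overline{\mathrm{Ran}\sqrt{1-T^2}}=(\ker\sqrt{1-T^2})^\perp$, and the latter subspace is $T$-invariant (hence also invariant under $Te^{i\vartheta(T)}$), both terms vanish. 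The argument for $d_-^\dagger$ is identical.

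The one technical point worth flagging is the functional-calculus meaning of $\frac{1}{\sqrt{2(1-T^2)}}$, which is unbounded near $\pm1\in\sigma(T)$. Restricting to $f\in\mathrm{Ran}(T^2-1)=\mathrm{Ran}(1-T^2)$ is exactly what guarantees $g$ is well-defined: writing $f=(1-T^2)h_0$ gives $g=\sqrt{(1-T^2)/2}\,h_0$ and the identity $\sqrt{2(1-T^2)}\,g=f$ used in the final step of (b). Once this bookkeeping is in place, the lemma follows directly from the algebra above.
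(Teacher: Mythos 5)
Your proof is correct, and its key step --- expanding $\|d_\pm^\dagger f\|^2$ and using $d_Ad_A^*=d_Bd_B^*=I_{\mathcal{K}}$, $d_Ad_B^*=d_Bd_A^*=T$ and $e^{i\vartheta(T)}+e^{-i\vartheta(T)}=2T$ to reduce the product of the two factors to $2(1-T^2)$ --- is exactly the computation in the paper's proof. The only (harmless) difference is in the range statement: the paper deduces $d_\pm^\dagger f\in\mathcal{D}_1$ directly from Proposition \ref{prop42910:04}(ii), since $g$ and $e^{i\vartheta(T)}g$ lie in $\ker(T^2-1)^\perp$, whereas you verify membership in $\mathcal{D}$ together with orthogonality to $\mathcal{D}_0$; both are valid.
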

\begin{proof}
Because by direct calculation, 
\[ (d_A - e^{-i\vartheta(T)}d_B ) (d_A^* - d_B^* e^{i\vartheta(T)}) = 
		 (2 - 2 T \cos\vartheta(T)) = 2(1-T^2) \]
it follows that for all $f \in {\rm Ran}(T^2-1)$, 
\begin{align*}
\|d_+^\dagger f\|^2
	& = \left\langle  \frac{1}{\sqrt{2(1-T^2)}}f, 
		 (d_A - e^{-i\vartheta(T)}d_B ) (d_A^* - d_B^* e^{i\vartheta(T)}) 
		 	\frac{1}{\sqrt{2(1-T^2)}}f \right\rangle \\
	& = \|f\|^2.
\end{align*}
This implies that $d_+^\dagger$ is an isometry on ${\rm Ran}(T^2-1)$.
Noting that $ (e^{-i\vartheta(T)}d_A - d_B ) (d_A^* e^{i\vartheta(T) - d_B^*})=2(1-T^2)$,
we also know that $d_-^\dagger$ is an  isometry on ${\rm Ran}(T^2-1)$.
\end{proof}
From Lemma \ref{0401lemma01},
$d_\pm^\dagger$ have unique extensions,
whose domains are
$\overline{{\rm Ran}(T^2-1)}$ $=\ker(T^2-1)^\perp$.
We denote the extension by the same symbol, 
{\it i.e.}, $d_\pm^{\dagger}: \ker (T^2-1)^\perp \to \mathcal{D}_1$ is given by
\[ d_\pm^\dagger f = \lim_{n \to \infty} d_\pm^\dagger f_n,
	\quad f \in \ker(T^2-1)^\perp, \]
where $\{f_n\} \subset {\rm Ran}(T^2-1)$ is an arbitrary sequence
satisfying $\lim_n f_n = f$.
Thus, we have the following:
\begin{proposition}
{\rm
$d_\pm^\dagger$ are isometries from $\ker(T^2-1)^\perp$ to $\mathcal{D}_1$.
}
\end{proposition}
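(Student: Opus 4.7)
The plan is to invoke the standard bounded linear transformation (BLT) extension theorem in the concrete form already set up immediately before the statement. Lemma \ref{0401lemma01} gives us isometries $d_\pm^\dagger$ defined on the dense subspace $\mathrm{Ran}(T^2-1) \subset \ker(T^2-1)^\perp$ with values in $\mathcal{D}_1$. Since any isometry is bounded (with operator norm $1$ on its domain), standard functional analysis guarantees a unique bounded extension to the closure of the domain, and an isometry extends to an isometry. So the main work is just verifying that the recipe stated in the excerpt does indeed realise this extension in our setting.

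First I would check that the limit defining $d_\pm^\dagger f$ for $f\in \ker(T^2-1)^\perp$ exists. Given a sequence $\{f_n\}\subset \mathrm{Ran}(T^2-1)$ with $f_n\to f$, the isometry property from Lemma \ref{0401lemma01} yields
\[
\|d_\pm^\dagger f_n - d_\pm^\dagger f_m\| = \|d_\pm^\dagger(f_n-f_m)\| = \|f_n-f_m\|,
\]
so $\{d_\pm^\dagger f_n\}$ is Cauchy in $\mathcal{H}$ and therefore convergent. Next I would show the limit is independent of the approximating sequence: if $g_n\to f$ as well, then $\|d_\pm^\dagger f_n - d_\pm^\dagger g_n\| = \|f_n - g_n\|\to 0$, so both sequences share the same limit. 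Linearity of the extension follows by taking limits in the linear combinations of approximating sequences.

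To finish, I would verify that the extension lands in $\mathcal{D}_1$ and preserves the norm. Since $\mathcal{D}_1$ is closed (it is the intersection $\mathcal{D}_0^\perp\cap\mathcal{D}$ of two closed subspaces) and each $d_\pm^\dagger f_n\in\mathcal{D}_1$, the limit $d_\pm^\dagger f$ also lies in $\mathcal{D}_1$. The isometry property extends by continuity of the norm:
\[
\|d_\pm^\dagger f\| = \lim_{n\to\infty}\|d_\pm^\dagger f_n\| = \lim_{n\to\infty}\|f_n\| = \|f\|.
\]

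There is no substantive obstacle here; the only thing to take mild care over is confirming that $\overline{\mathrm{Ran}(T^2-1)} = \ker(T^2-1)^\perp$, which follows from self-adjointness of $T^2-1$ via $\ker A^\perp = \overline{\mathrm{Ran}\,A^*}$ applied to $A = T^2-1 = (T^2-1)^*$. Everything else is the routine BLT argument, and the result is essentially stated as a proposition only to name the extended operators for use in later sections.
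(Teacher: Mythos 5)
Your argument is correct and is essentially the paper's own: the paper simply notes after Lemma \ref{0401lemma01} that the isometries $d_\pm^\dagger$ extend uniquely from the dense subspace $\mathrm{Ran}(T^2-1)$ to its closure $\ker(T^2-1)^\perp$, which is exactly the BLT extension you spell out (Cauchy sequences, independence of the approximating sequence, closedness of $\mathcal{D}_1$, and continuity of the norm). Your added check that $\overline{\mathrm{Ran}(T^2-1)}=\ker(T^2-1)^\perp$ via self-adjointness is the same density fact the paper records just before defining $d_\pm^\dagger$.
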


We use $\mathcal{D}_1^\pm$ to denote the range of $d_\pm^\dagger$:
\[ \mathcal{D}_1^\pm = d_\pm^\dagger \ker(T^2-1)^\perp. \]
\begin{lemma}
\label{lemmaDpm}
{\rm
$\mathcal{D}_1^{\pm}$ are closed subspaces of ${\mathcal{D}_1}$
and 
\[ {\mathcal{D}_1} = \mathcal{D}_1^+ \oplus \mathcal{D}_1^-. \]
}
\end{lemma}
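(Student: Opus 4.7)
The plan is three steps. (a) Each $\mathcal{D}_1^\pm$ is closed because $d_\pm^\dagger$ is an isometry defined on the complete space $\ker(T^2-1)^\perp$, so its range is automatically a closed subspace. (b) To show $\mathcal{D}_1^+\perp\mathcal{D}_1^-$, I would evaluate $\langle d_+^\dagger f, d_-^\dagger g\rangle$ for $f,g \in {\rm Ran}(T^2-1)$ (so that $\tfrac{1}{\sqrt{2(1-T^2)}}f$ makes literal sense) and extend by continuity. Bringing both factors $\tfrac{1}{\sqrt{2(1-T^2)}}$ together through the adjoint, the key middle operator is
\[ (d_A - e^{-i\vartheta(T)}d_B)(d_A^* e^{i\vartheta(T)} - d_B^*). \]
Using $d_Ad_A^* = d_Bd_B^* = I_\mathcal{K}$, $d_Ad_B^* = d_Bd_A^* = T$, and $e^{-i\vartheta(T)}T e^{i\vartheta(T)} = T$ (since these all commute), this collapses to $e^{i\vartheta(T)} + e^{-i\vartheta(T)} - 2T = 2\cos\vartheta(T) - 2T = 0$.

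The substantive step is (c): showing the closed sum $\mathcal{N} := \mathcal{D}_1^+\oplus\mathcal{D}_1^-$ exhausts $\mathcal{D}_1$. I would take $\psi \in \mathcal{D}_1\cap\mathcal{N}^\perp$ and show $\psi=0$. Writing $f = (1-T^2)h$ with $h\in\mathcal{K}$, the identity $\tfrac{1-T^2}{\sqrt{2(1-T^2)}} = \tfrac{1}{\sqrt{2}}\sqrt{1-T^2}$ gives $d_+^\dagger f = \tfrac{1}{\sqrt{2}}(d_A^* - d_B^* e^{i\vartheta(T)})\sqrt{1-T^2}\,h$. As $h$ ranges over $\mathcal{K}$, the condition $\psi\perp\mathcal{D}_1^+$ becomes $\sqrt{1-T^2}\,(d_A - e^{-i\vartheta(T)}d_B)\psi = 0$, hence $(d_A - e^{-i\vartheta(T)}d_B)\psi\in\ker(T^2-1)$. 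The analogous computation with $d_-^\dagger$ gives $(e^{-i\vartheta(T)}d_A - d_B)\psi\in\ker(T^2-1)$. Since $\ker(T^2-1)$ is invariant under every function of $T$, eliminating $d_A\psi$ from these two relations yields $(1 - e^{-2i\vartheta(T)})d_B\psi \in \ker(T^2-1)$. On $\ker(T^2-1)^\perp$ the operator $1-e^{-2i\vartheta(T)}$ is injective (the multiplier $1-e^{-2i\vartheta(\lambda)}$ vanishes only at $\lambda=\pm 1$, and the spectral measure restricted to this subspace carries no mass at those points), so $d_B\psi\in\ker(T^2-1)$ and, feeding this back, $d_A\psi\in\ker(T^2-1)$ as well.

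Finally, by Proposition~\ref{prop42910:04}(i), $\Pi_\mathcal{A}\psi = d_A^* d_A\psi \in d_A^*\ker(T^2-1) = \mathcal{D}_0$. But $\psi\in\mathcal{D}_1\subseteq\mathcal{D}_0^\perp$ forces $0 = \langle\psi,\Pi_\mathcal{A}\psi\rangle = \|\Pi_\mathcal{A}\psi\|^2$, so $\Pi_\mathcal{A}\psi=0$, i.e., $\psi\in\mathcal{A}^\perp$; symmetrically $\psi\in\mathcal{B}^\perp$, hence $\psi\in(\mathcal{A}+\mathcal{B})^\perp = \mathcal{D}^\perp$, and $\psi\in\mathcal{D}$ then gives $\psi=0$. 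The main obstacle is the unbounded factor $1/\sqrt{2(1-T^2)}$ in the definition of $d_\pm^\dagger$, which I handle by doing all computations on the dense subspace ${\rm Ran}(T^2-1)$ (extending isometrically by continuity) and by invoking the spectral injectivity of $1-e^{-2i\vartheta(T)}$ on $\ker(T^2-1)^\perp$.
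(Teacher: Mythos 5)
Your proof is correct, and the closedness and orthogonality steps coincide with the paper's (the same identity $(d_A - e^{-i\vartheta(T)}d_B)(d_A^*e^{i\vartheta(T)} - d_B^*) = 2\cos\vartheta(T) - 2T = 0$ does the work). Where you genuinely diverge is the completeness step. The paper argues constructively: it takes $\psi = d_A^*f + d_B^*g$ with $f,g \in \ker(T^2-1)^\perp$, performs an explicit change of variables $(f,g)\mapsto(F,G)$ with $F = -\tfrac{1}{\sqrt2 i}(e^{-i\vartheta(T)}f+g)$, $G = \tfrac{1}{\sqrt2 i}(f + e^{-i\vartheta(T)}g)$ so that $d_+^\dagger F + d_-^\dagger G = d_A^*f + d_B^*g$, and then invokes the density statement of Proposition~\ref{prop42910:04}(ii) together with closedness of $\mathcal{D}_1^+\oplus\mathcal{D}_1^-$. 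You instead take $\psi\in\mathcal{D}_1\cap(\mathcal{D}_1^+\oplus\mathcal{D}_1^-)^\perp$ and kill it: the orthogonality conditions give $(d_A - e^{-i\vartheta(T)}d_B)\psi$ and $(e^{-i\vartheta(T)}d_A - d_B)\psi$ in $\ker(T^2-1)$, elimination gives $(1-e^{-2i\vartheta(T)})d_B\psi\in\ker(T^2-1)$, and since $\ker(1-e^{-2i\vartheta(T)}) = \operatorname{Ran}E_T(\{\pm1\}) = \ker(T^2-1)$ you get $d_A\psi, d_B\psi\in\ker(T^2-1)$, whence $\Pi_{\mathcal{A}}\psi,\Pi_{\mathcal{B}}\psi\in\mathcal{D}_0\perp\mathcal{D}_1$ forces $\psi\in\ker d_A\cap\ker d_B=\mathcal{D}^\perp$ and so $\psi=0$. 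Both routes are sound; the paper's buys the explicit inverse formulas (the ``change of basis'' between $(f,g)$ and $(F,G)$ that is effectively reused when computing $U_{\mathcal{D}_1}$ later), at the cost of quoting a uniqueness result from \cite{SS15}, while yours is shorter and self-contained modulo Proposition~\ref{prop42910:04}(i),(iii), at the cost of the spectral-injectivity argument and without producing the explicit preimages.
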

\begin{proof}
{\rm
Because $d_\pm^\dagger$ is an isometry,
it is clear that $\mathcal{D}_1^\pm$ is a closed subspace of $\mathcal{D}$.
We first show that $\mathcal{D}_1^\pm$ are orthogonal to each other.
Let
$\psi_\pm \in \mathcal{D}_1^\pm$
and write it as
$\psi_\pm = \lim_{n \to \infty} d_\pm^\dagger f_n^{\pm}$ ($f_n^\pm \in {\rm Ran}(T^2-1)$).
It follows that
\begin{align*}
& \langle\psi_+, \psi_- \rangle
	= \lim_{n\to \infty} \left\langle 
			d_+^\dagger f_n^+, d_-^\dagger f^-  
				\right\rangle \\
	& \quad =  \lim_{n\to \infty} \left\langle  
		 \frac{1}{\sqrt{2(1-T^2)}}f_n^+,
		(d_A - e^{-i\vartheta(T)} d_B ) (d_A^* e^{i\vartheta(T)} - d_B^* ) \frac{1}{\sqrt{2(1-T^2)}}f_n^- 
			\right\rangle \\
			& \quad = 0,
\end{align*}
where in the last equality, we have used the fact that
\begin{equation} 
\label{eqdperp}
(d_A - e^{-i\vartheta(T)} d_B ) (d_A^* e^{i\vartheta(T)} - d_B^* )
	= 2 \cos \vartheta(T)-2T = 0.
\end{equation}

It remains to be shown that 
${\mathcal{D}_1} =  \mathcal{D}_1^+ \oplus \mathcal{D}_1^- $.
It suffices to show that
$d_A^* \ker(T^2-1)^\perp + d_B^* \ker(T^2-1)
  	\subset \mathcal{D}_1^+ \oplus \mathcal{D}_1^- $.
To this end, take a $\psi \in d_A^* \ker(T^2-1)^\perp + d_B^* \ker(T^2-1)$.
From \cite{SS15}, 
there exist unique vectors $ f, \ g \in \ker(T^2-1)^\perp$
such that
\[ \psi = d_A^* f + d_B^* g. \]
We now take vectors $f_n, \ g_n \in {\rm Ran}\sqrt{1-T^2}$
satisfying $f=\lim_{n \to \infty} f_n$ and $g=\lim_{n \to \infty} g_n$
and set
\begin{align*}
F_n = -\frac{1}{\sqrt{2} i}(e^{-i\vartheta(T)} f_n + g_n), \quad
	G_n = \frac{1}{\sqrt{2} i}(f_n +e^{-i\vartheta(T)} g_n).
\end{align*}
Then, $F_n, \ G_n \in {\rm Ran}\sqrt{1-T^2}$
and
\begin{align*}
f_n = \frac{1}{\sqrt{2(1-T^2)}} (F_n + e^{i\vartheta(T)} G_n), \quad
	g_n = - \frac{1}{\sqrt{2(1-T^2)}}  (e^{i\vartheta(T)}F_n +  G_n). 
\end{align*}
By direct calculation, 
\begin{align*}
d_+^\dagger F_n + d_-^\dagger G_n
	& = d_A^* f_n + d_B^* g_n.
\end{align*}
Since the limits $F:=\lim_{n \to \infty} F_n$ and $G:=\lim_{n \to \infty} G_n$ exist
and $F, \ G \in \ker(T^2-1)^\perp$,
\begin{align*}
\psi & = \lim_{n \to \infty} (d_A^* f_n + d_B^* g_n) 
		= \lim_{n \to \infty} (d_+^\dagger F_n + d_-^\dagger G_n) \\
	& = d_+^\dagger F + d_-^\dagger G \in \mathcal{D}_1^+ \oplus \mathcal{D}_1^-.
\end{align*}
This completes the proof.
}
\end{proof}

Let $d_{\pm,1}$ be the adjoint of $d_\pm^\dagger:\ker(T^2-1)^\perp \to \mathcal{D}_1$.
Then,
\[ d_{\pm,1} = (d_\pm^\dagger)^*, \quad d_{\pm,1}^* = d_\pm^\dagger. \]
\begin{proposition}
\label{prop4231708}
{\rm
On the entire ${\mathcal{D}_1}$,
\begin{align*}
d_{+,1} = \frac{1}{\sqrt{2(1-T^2)} }(d_A -e^{-i\vartheta(T)} d_B), \quad
	d_{-,1} = \frac{1}{\sqrt{2(1-T^2)} }(e^{-i\vartheta(T)} d_A - d_B).
\end{align*}
Moreover, 
\begin{itemize}
\item[(i)] $d_{\pm,1} d_{\pm,1}^* = I_{\ker(T^2-1)^\perp}$, \ $d_{\pm,1} d_{\mp,1}^* = 0$.
\item[(ii)] $\tilde\Pi_{\mathcal{D}_1^\pm} := d_{\pm,1}^*d_{\pm,1}$ is 
	the projection from $\mathcal{D}_1$ onto $\mathcal{D}_1^\pm$.
\end{itemize}
}
\end{proposition}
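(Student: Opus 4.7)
The plan is to verify the explicit formula on a dense subspace of $\mathcal{D}_1$, namely $\mathcal{D}_1^\circ := d_A^*\ker(T^2-1)^\perp + d_B^*\ker(T^2-1)^\perp$ (dense by Proposition \ref{prop42910:04}(ii)), and then read off (i) and (ii) from the isometry property of $d_\pm^\dagger$ together with the orthogonality relation \eqref{eqdperp} that has already been established in the proof of Lemma \ref{lemmaDpm}.

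For the formula, I would first observe that $(d_A - e^{-i\vartheta(T)}d_B)$ sends $\mathcal{D}_1^\circ$ into $\mathrm{Ran}\sqrt{1-T^2}$, which makes the prefactor $\frac{1}{\sqrt{2(1-T^2)}}$ well-defined on the image. Concretely, using $d_A d_A^* = I$, $d_B d_B^* = I$, $d_A d_B^* = T$, $d_B d_A^* = T^*=T$, and $e^{-i\vartheta(T)} = T - i\sqrt{1-T^2}$, a direct computation gives
\begin{align*}
(d_A - e^{-i\vartheta(T)}d_B)(d_A^* f + d_B^* g)
= i\sqrt{1-T^2}\,\bigl(e^{-i\vartheta(T)} f + g\bigr),
\end{align*}
which lies in $\mathrm{Ran}\sqrt{1-T^2}$ whenever $f,g\in\ker(T^2-1)^\perp$. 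Then for $h\in\mathrm{Ran}(T^2-1)$ I would verify the adjoint identity
\begin{align*}
\langle d_+^\dagger h,\ d_A^* f + d_B^* g\rangle
= \left\langle h,\ \frac{1}{\sqrt{2(1-T^2)}}(d_A - e^{-i\vartheta(T)}d_B)(d_A^* f + d_B^* g)\right\rangle
\end{align*}
by pushing $(d_A^* - d_B^* e^{i\vartheta(T)})$ to the right side as its bounded adjoint; extending by continuity from $\mathcal{D}_1^\circ$ to $\mathcal{D}_1$ and from $\mathrm{Ran}(T^2-1)$ to $\ker(T^2-1)^\perp$ then identifies $d_{+,1} = (d_+^\dagger)^*$ with the stated expression. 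The argument for $d_{-,1}$ is identical, using $e^{i\vartheta(T)}$ in the appropriate place.

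For (i), since $d_\pm^\dagger$ is an isometry from $\ker(T^2-1)^\perp$ by Lemma \ref{0401lemma01}, $d_{\pm,1}d_{\pm,1}^* = (d_\pm^\dagger)^* d_\pm^\dagger = I_{\ker(T^2-1)^\perp}$ immediately. For the cross-term $d_{+,1}d_{-,1}^* = (d_+^\dagger)^* d_-^\dagger$, I would evaluate on $f\in\mathrm{Ran}(T^2-1)$ by inserting the explicit formulas and invoking the factor $(d_A - e^{-i\vartheta(T)}d_B)(d_A^* e^{i\vartheta(T)} - d_B^*) = 2\cos\vartheta(T) - 2T = 0$, which is exactly \eqref{eqdperp} from the proof of Lemma \ref{lemmaDpm}. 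Density gives the identity on the whole space.

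Finally, (ii) follows formally: set $P_\pm := d_{\pm,1}^* d_{\pm,1} = d_\pm^\dagger (d_\pm^\dagger)^*$. Self-adjointness is clear, and $P_\pm^2 = d_\pm^\dagger\bigl((d_\pm^\dagger)^* d_\pm^\dagger\bigr)(d_\pm^\dagger)^* = d_\pm^\dagger I (d_\pm^\dagger)^* = P_\pm$ by the isometry property, so $P_\pm$ is an orthogonal projection whose range is $d_\pm^\dagger\ker(T^2-1)^\perp = \mathcal{D}_1^\pm$ by definition. The main technical obstacle, and the only nontrivial step, is the bookkeeping in the first paragraph: making precise that $\frac{1}{\sqrt{2(1-T^2)}}(d_A - e^{-i\vartheta(T)}d_B)$ really extends to a bounded operator on $\mathcal{D}_1$ despite the unbounded prefactor, which hinges on the cancellation $1 - e^{-i\vartheta(T)}T = i\sqrt{1-T^2}\,e^{-i\vartheta(T)}$ and $T - e^{-i\vartheta(T)} = i\sqrt{1-T^2}$ that produces one power of $\sqrt{1-T^2}$ in every term.
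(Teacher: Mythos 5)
Your argument is correct and rests on the same two pillars as the paper's own proof: the product identities of Lemma \ref{lemmadpmdmp*} (of which \eqref{eqdperp} is the cross-term case) and the identification of $d_{\pm,1}$ as the adjoint of the isometry $d_\pm^\dagger$ from Lemma \ref{0401lemma01}; parts (i) and (ii) are handled exactly as in the paper. The one genuine difference is where you verify the formula: you test it on the dense subspace $d_A^*\ker(T^2-1)^\perp + d_B^*\ker(T^2-1)^\perp$ of Proposition \ref{prop42910:04}(ii) and extend by continuity, whereas the paper exploits the exact decomposition $\mathcal{D}_1 = \mathcal{D}_1^+ \oplus \mathcal{D}_1^-$ of Lemma \ref{lemmaDpm} to write every $\psi \in \mathcal{D}_1$ as $d_+^\dagger F + d_-^\dagger G$ with $F,G \in \ker(T^2-1)^\perp$ and computes $(d_A - e^{-i\vartheta(T)}d_B)\,d_+^\dagger F = \sqrt{2(1-T^2)}\,F \in \mathrm{Ran}\sqrt{1-T^2}$ and $(d_A - e^{-i\vartheta(T)}d_B)\,d_-^\dagger G = 0$ directly. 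The paper's route buys a slightly stronger conclusion for free: the expression $\frac{1}{\sqrt{2(1-T^2)}}(d_A - e^{-i\vartheta(T)}d_B)$ is literally well defined on \emph{every} vector of $\mathcal{D}_1$, not merely on a dense subspace followed by a bounded extension. Since the proposition asserts the formula ``on the entire $\mathcal{D}_1$'', you should either add this pointwise verification (one line once Lemma \ref{lemmaDpm} is invoked) or state explicitly that $d_{\pm,1} := (d_\pm^\dagger)^*$ is bounded and that your density argument identifies it with the closure of the displayed expression. Your cancellation $(d_A - e^{-i\vartheta(T)}d_B)(d_A^*f + d_B^*g) = i\sqrt{1-T^2}\,(e^{-i\vartheta(T)}f+g)$ is correct and is essentially the computation the paper performs separately in Lemma \ref{lem42910:42}.
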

To prove this proposition,
we use the following lemma:
\begin{lemma}
\label{lemmadpmdmp*}
{\rm
\begin{itemize}
\item[(i)] $(d_A - e^{-i\vartheta(T)} d_B ) (d_A^* e^{i\vartheta(T)} - d_B^* ) =0$.
\item[(ii)] $( e^{-i\vartheta(T)} d_A -d_B ) (d_A^* - d_B^* e^{i\vartheta(T)} )= 0$.
\item[(iii)] $( d_A -e^{-i\vartheta(T)}d_B ) (d_A^* - d_B^* e^{i\vartheta(T)} )= 2(1-T^2)$.
\item[(iv)]  $(e^{-i\vartheta(T)} d_A - d_B ) (d_A^* e^{i\vartheta(T)} - d_B^* ) = 2(1-T^2)$.
\end{itemize}
}
\end{lemma}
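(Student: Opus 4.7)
The plan is a direct algebraic verification, exploiting three facts: (a) the coisometry relations $d_A d_A^* = I_{\mathcal{K}}$ and $d_B d_B^* = d_A S S^* d_A^* = I_{\mathcal{K}}$; (b) the identities $d_A d_B^* = T$ and $d_B d_A^* = T^* = T$ (the latter because $T = d_A S d_A^*$ with $S$ self-adjoint); and (c) the spectral-calculus relations $e^{i\vartheta(T)} + e^{-i\vartheta(T)} = 2\cos\vartheta(T) = 2T$ and $e^{i\vartheta(T)} e^{-i\vartheta(T)} = I_{\mathcal{K}}$, together with the fact that any Borel function of $T$ commutes with $T$ itself.

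For (iii), I would simply expand the product and substitute:
\begin{align*}
(d_A - e^{-i\vartheta(T)} d_B)(d_A^* - d_B^* e^{i\vartheta(T)})
&= d_A d_A^* - d_A d_B^* e^{i\vartheta(T)} - e^{-i\vartheta(T)} d_B d_A^* + e^{-i\vartheta(T)} d_B d_B^* e^{i\vartheta(T)} \\
&= I_{\mathcal{K}} - T e^{i\vartheta(T)} - e^{-i\vartheta(T)} T + I_{\mathcal{K}} \\
&= 2 - T(e^{i\vartheta(T)} + e^{-i\vartheta(T)}) = 2(1-T^2).
\end{align*}
Identity (iv) is completely symmetric to (iii) — just exchange the order of the first operator in the second factor — and the same computation works with $e^{\pm i\vartheta(T)}$ swapped.

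For (i) and (ii), the same expand-and-substitute procedure applies, but now the four resulting terms cancel in pairs. For (i):
\begin{align*}
(d_A - e^{-i\vartheta(T)} d_B)(d_A^* e^{i\vartheta(T)} - d_B^*)
&= e^{i\vartheta(T)} - T - e^{-i\vartheta(T)} T e^{i\vartheta(T)} + e^{-i\vartheta(T)} \\
&= \bigl(e^{i\vartheta(T)} + e^{-i\vartheta(T)}\bigr) - 2T = 0,
\end{align*}
where we used that $e^{-i\vartheta(T)}$ commutes with $T$ and $e^{i\vartheta(T)}e^{-i\vartheta(T)}=I_{\mathcal{K}}$. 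Identity (ii) follows analogously.

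There is no real obstacle here — the computation is mechanical once one has set up the commutation/coisometry relations correctly. The only point requiring a little care is that each mixed term involves $e^{\pm i\vartheta(T)}$ sandwiching $T$; since these are all functions of the self-adjoint operator $T$ on $\mathcal{K}$, they commute, so the sandwiches collapse to a single factor of $T$ times a product of exponentials. Everything then reduces to the single spectral identity $e^{i\vartheta(T)} + e^{-i\vartheta(T)} = 2T$.
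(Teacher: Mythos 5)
Your proof is correct and follows essentially the same route as the paper: direct expansion using $d_Ad_A^*=d_Bd_B^*=I_{\mathcal{K}}$, $d_Ad_B^*=d_Bd_A^*=T$, and $e^{i\vartheta(T)}+e^{-i\vartheta(T)}=2T$. The only cosmetic difference is that the paper expands only (i) and (iv) and obtains (ii) and (iii) by taking adjoints, whereas you expand all four directly.
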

\begin{proof}
{\rm
(i) is proved in \eqref{eqdperp}.
(ii) is obtained from (i) by taking the adjoint.
(iii) is also obtained from the adjoint of (iv).
(iv) is proved by direct calculation:
\begin{align*}
(e^{-i\vartheta(T)} d_A - d_B ) (d_A^* e^{i\vartheta(T)} - d_B^* )
= 2 - 2 T \cos \vartheta(T) = 2(1-T^2).
\end{align*}
}
\end{proof}
\begin{proof}[Proof of Propositon \ref{prop4231708}]
{\rm
For all $F \in \ker (T^2 -1)^\perp$,
there exists a sequence $\{F_n\} \subset {\rm Ran}\sqrt{1-T^2}$
such that $F=\lim_{n \to \infty}F_n$.
From (iii) and (iv) of Lemma \ref{lemmadpmdmp*},
\begin{align}
( d_A -e^{-i\vartheta(T)}d_B ) d_+^\dagger F
	& =\lim_{n \to \infty} \sqrt{2(1-T^2)} F_n \notag \\
\label{122301} 
	& = \sqrt{2(1-T^2)} F \in {\rm Ran} \sqrt{1-T^2}, 
	\\
(e^{-i\vartheta(T)} d_A - d_B ) d_-^\dagger  F
	& =\lim_{n \to \infty} \sqrt{2(1-T^2)} F_n \notag \\
\label{122302}
	& = \sqrt{2(1-T^2)} F \in {\rm Ran} \sqrt{1-T^2}.
\end{align}
In addition, from (i) and (ii) of Lemma \ref{lemmadpmdmp*},
\begin{align}
\label{122303}
( d_A -e^{-i\vartheta(T)}d_B ) d_-^\dagger  F
	& =0, \\
\label{122304}
(e^{-i\vartheta(T)} d_A - d_B ) d_+^\dagger  F
	& =0.
\end{align}
By \eqref{122301}, \eqref{122302}, \eqref{122303} and \eqref{122304},
we know that the operators 
$o_+ :=\frac{1}{\sqrt{2(1-T^2)} }(d_A -e^{-i\vartheta(T)} d_B)$
and $o_-:=\frac{1}{\sqrt{2(1-T^2)} }(e^{-i\vartheta(T)} d_A - d_B)$
can be defined on the entire $\mathcal{D}_1$.
To prove that $d_{\pm,1} = o_\pm$,
it suffices to show that the adjoint of $o_\pm$ are $d_\pm^\dagger$.
For all $\psi \in \mathcal{D}_1$ and
$f \in \ker(T^2-1)^\perp$,
\begin{align*}
\left\langle f, o_+ \psi \right\rangle 
& = \lim_{n \to \infty} 
	\left\langle \frac{1}{\sqrt{2(1-T^2)}} f_n,  (d_A -  e^{-it \vartheta(T)} d_B) \psi 
		 \right\rangle \\
& = \lim_{n \to \infty} 
	\left\langle (d_A^* - d_B^*  e^{it \vartheta(T)} ) \frac{1}{\sqrt{2(1-T^2)}} f_n,   \psi 
		 \right\rangle 
	= \langle d_+^\dagger f, \psi \rangle, 
\end{align*}
where $\{f_n\} \subset {\rm Ran}(T^2-1)$
is a sequence such that $f =\lim_{n \to \infty}f_n$.
This means that  $d_+^\dagger$
is the adjoint of $o_+$. 
Hence, $d_{+,1}=o_+$.
The same proof works for $d_{-,1} = o_-$.
The former statement of the proposition is proved.

(i) is proved from Lemma \ref{lemmadpmdmp*}.
We prove (ii). To this end,  we take $\psi_\pm \in \mathcal{D}_1^\pm$
and write it as
$\psi_\pm = d_\pm^\dagger  F$ ($F \in \ker(T^2-1)^\perp$).
Combining (i) with $d_{\pm,1}^* = d_\pm^\dagger$ yields the result that
\[ \tilde\Pi_{\mathcal{D}_1^\pm} \psi_\pm = (d_{\pm,1}^* d_{\pm,1}) (d_\pm^\dagger  F) 
	= d_\pm^* F = \psi_\pm. \] 
Hence, ${\rm Ran}\Pi_{\mathcal{D}_1^\pm} = \mathcal{D}_1^\pm$.
It remains to be proved that $\tilde\Pi_{\mathcal{D}_1^\pm}$ is a projection.
It is clear, by definition, that $\tilde\Pi_{\mathcal{D}_1^\pm}$ is self-adjoint.
By (i), 
$\tilde\Pi_{\mathcal{D}_1^\pm}^2 = d_{\pm,1}^*(d_{\pm,1} d_{\pm,1}^*) d_{\pm,1}
	= \tilde\Pi_{\mathcal{D}_1^\pm}$, 
and we obtain the desired result.
}
\end{proof}

In what follows, 
we extend the domain $\mathcal{D}_1$ of $d_{\pm,1}$ 
to the entire space $\mathcal{H}$. 
We will denote the extension of $d_{\pm,1}$ by $d_\pm$.
\begin{lemma}
\label{lem42910:42}
{\rm
On $\mathcal{D}_0 \oplus \mathcal{D}^\perp$,
\begin{itemize}
\item[(i)] $d_A - e^{-i\vartheta(T)} d_B = 0$;
\item[(ii)] $e^{-i\vartheta(T)}d_A -  d_B = 0$.
\end{itemize}
}
\end{lemma}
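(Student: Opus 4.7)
The plan is to prove each identity by reducing to the block decomposition of the domain $\mathcal{D}_0 \oplus \mathcal{D}^\perp$ and treating the pieces separately. On $\mathcal{D}^\perp$ the claim is immediate: by Proposition \ref{prop42910:04}(iii), $\mathcal{D}^\perp = \ker(d_A) \cap \ker(d_B)$, so both $d_A$ and $d_B$ annihilate any $\psi \in \mathcal{D}^\perp$. Thus both (i) and (ii) vanish on this subspace for trivial reasons.

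The substantive step is the analysis on $\mathcal{D}_0$. By Proposition \ref{prop42910:04}(i), every $\psi \in \mathcal{D}_0$ is of the form $\psi = d_A^* f$ with $f \in \ker(T^2-1)$, and since $\ker(T^2-1) = \ker(T-1) \oplus \ker(T+1)$, it suffices to treat $f \in \ker(T \mp 1)$ separately. For such $f$, I would compute directly: $d_A \psi = d_A d_A^* f = f$ using $d_A d_A^* = I_\mathcal{K}$, and $d_B \psi = d_A S d_A^* f = T f = \pm f$ (using $d_B = d_A S$ and the definition $T = d_A S d_A^*$).

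The key observation is the action of $e^{-i\vartheta(T)}$ on these eigenvectors. Since $\vartheta(1) = 0$ and $\vartheta(-1) = \pi$, functional calculus gives $e^{-i\vartheta(T)} f = f$ when $f \in \ker(T-1)$ and $e^{-i\vartheta(T)} f = -f$ when $f \in \ker(T+1)$. In either case, $e^{-i\vartheta(T)} f = T f$ on $\ker(T^2-1)$ (both sides equal $\pm f$). Plugging into (i),
\begin{equation*}
(d_A - e^{-i\vartheta(T)} d_B)\psi = f - e^{-i\vartheta(T)}(T f) = f - (Tf)(Tf)/\| \cdot \|,
\end{equation*}
which I would just write cleanly as: for $f \in \ker(T \mp 1)$, the vector $d_B \psi = \pm f$ and $e^{-i\vartheta(T)}(\pm f) = (\pm 1)(\pm f) = f$, so (i) yields $f - f = 0$. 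The same computation with the roles exchanged gives (ii).

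There is no real obstacle here; the argument is essentially bookkeeping around the spectral theorem applied to $T$ at its boundary spectrum $\{\pm 1\}$, combined with the defining identities $d_A d_A^* = I_\mathcal{K}$, $d_B = d_A S$, and $T = d_A d_B^*$. The main thing to be careful about is recognizing that the two identities (i) and (ii) are not independent on $\mathcal{D}_0$: since $T f = \pm f$ on the respective eigenspaces, the operators $d_A$ and $d_B$ act identically up to a sign that is exactly compensated by $e^{-i\vartheta(T)}$, which is why both combinations collapse to zero simultaneously.
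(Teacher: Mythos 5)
Your proof is correct and follows essentially the same route as the paper: reduce to $\mathcal{D}^\perp = \ker(d_A)\cap\ker(d_B)$ (where everything vanishes trivially) and to $\psi = d_A^* f$ with $f\in\ker(T^2-1)$, where $d_A\psi = f$ and $d_B\psi = Tf$; the paper then finishes by factoring $1 - e^{-i\vartheta(T)}T = i\sqrt{1-T^2}\,e^{-i\vartheta(T)}$, which annihilates $f$, while you instead evaluate the functional calculus on the $\pm1$ eigenspaces --- an equivalent computation. The one displayed line containing ``$(Tf)(Tf)/\|\cdot\|$'' is garbled as written, but the clean eigenvector computation you substitute for it immediately afterward is the correct one, so there is no actual gap.
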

\begin{proof}
Because by (iii) of Proposition \ref{prop42910:04},  (i) and (ii) hold on $\mathcal{D}^\perp$,
we need to only establish them on $\mathcal{D}_0$.
Let $\psi_0 \in \mathcal{D}_0$
and write it as
$\psi_0 = d_A^* f_0$ ($f_0 \in \ker(T^2-1)$).
Then,
\begin{align*} 
(d_A - e^{-i\vartheta(T)} d_B)\psi_0
	& = (1 -e^{-i\vartheta(T)} T) f_0 \\
	& = i\sqrt{1-T^2} e^{-i\vartheta(T)} f_0 = 0.
\end{align*}
Similarly, 
\begin{align*} 
(e^{-i\vartheta(T)}d_A -  d_B)\psi_0
	= -i\sqrt{1-T^2} f_0 = 0.
\end{align*}
\end{proof}
By Lemma \ref{lem42910:42}, 
operators $d_\pm:\mathcal{H} \to \mathcal{K}$ can be defined by
\begin{align*}
d_{+} = \frac{1}{\sqrt{2(1-T^2)} }(d_A -e^{-i\vartheta(T)} d_B), \quad
	d_{-} = \frac{1}{\sqrt{2(1-T^2)} }(e^{-i\vartheta(T)} d_A - d_B)
\end{align*}
and 
\begin{equation}
\label{eq42912:00}
d_\pm = d_{\pm,1} \Pi_{\mathcal{D}_1^\pm}, \quad
	d_\pm^* = d_{\pm,1}^* \Pi_{\ker(T^2-1)^\perp},
\end{equation}
where $\Pi_{\mathcal{D}_1^\pm}$ and $\Pi_{\ker(T^2-1)^\perp}$ are 
the projections onto $\mathcal{D}_1^\pm$ and $\ker(T^2-1)^\perp$,
respectively. 
From \eqref{eq42912:00} and Proposition \ref{prop4231708}, 
we have the following:
\begin{proposition}
\label{prop42914:25}
{\rm
Let $d_\pm$ be defined as above.
\begin{itemize}
\item[(i)] $\ker(d_\pm) 
	= \mathcal{D}_1^\mp \oplus 
		\mathcal{D}_0 \oplus \mathcal{D}^\perp$
	and ${\rm Ran}(d_\pm) = \ker(T^2-1)^\perp$;
\item[(ii)] $\mathcal{D}_1^\pm = d_\pm^* \ker(T^2-1)^\perp$;
\item[(iii)] $d_\pm d_\pm^* = \Pi_{\ker(T^2-1)^\perp}$,  $d_\pm d_\mp^*=0$;
\item[(iv)] $d_\pm^*d_\pm = \Pi_{\mathcal{D}_1^\pm}$.
\end{itemize}
}
\end{proposition}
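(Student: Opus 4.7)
The plan is to reduce all four claims to properties of $d_{\pm,1}$ already established in Proposition \ref{prop4231708} and Lemma \ref{lemmaDpm}. The two key inputs from \eqref{eq42912:00} are the factorizations $d_\pm = d_{\pm,1}\,\Pi_{\mathcal{D}_1^\pm}$ and $d_\pm^* = d_{\pm,1}^*\,\Pi_{\ker(T^2-1)^\perp}$, together with the orthogonal decomposition $\mathcal{H} = \mathcal{D}_1^+ \oplus \mathcal{D}_1^- \oplus \mathcal{D}_0 \oplus \mathcal{D}^\perp$. Two elementary observations will do most of the work: since $d_{\pm,1}^* = d_\pm^\dagger$ has range exactly $\mathcal{D}_1^\pm$, one has $\Pi_{\mathcal{D}_1^\pm} d_{\pm,1}^* = d_{\pm,1}^*$ and $\Pi_{\mathcal{D}_1^\pm} d_{\mp,1}^* = 0$; and since $\mathrm{Ran}(d_{\pm,1}) \subset \ker(T^2-1)^\perp$ by Proposition \ref{prop4231708}(i), one has $\Pi_{\ker(T^2-1)^\perp} d_{\pm,1} = d_{\pm,1}$.

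I would prove (iii) and (iv) first by direct composition. For (iii), plugging in the factorizations and applying the observations above collapses $d_\pm d_\pm^*$ to $d_{\pm,1} d_{\pm,1}^* \Pi_{\ker(T^2-1)^\perp} = \Pi_{\ker(T^2-1)^\perp}$ by Proposition \ref{prop4231708}(i); the cross product $d_\pm d_\mp^*$ vanishes either because $\Pi_{\mathcal{D}_1^\pm} d_{\mp,1}^* = 0$ by orthogonality of $\mathcal{D}_1^+$ and $\mathcal{D}_1^-$, or equivalently because $d_{\pm,1} d_{\mp,1}^* = 0$. For (iv), the same kind of cancellation gives
\[ d_\pm^* d_\pm = d_{\pm,1}^*\, d_{\pm,1}\,\Pi_{\mathcal{D}_1^\pm} = \tilde{\Pi}_{\mathcal{D}_1^\pm}\,\Pi_{\mathcal{D}_1^\pm} = \Pi_{\mathcal{D}_1^\pm}, \]
since on $\mathcal{D}_1$ the projection $\tilde{\Pi}_{\mathcal{D}_1^\pm}$ from Proposition \ref{prop4231708}(ii) agrees with $\Pi_{\mathcal{D}_1^\pm}$.

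Claims (i) and (ii) then follow at once. From (iv), $\ker(d_\pm) = \ker(d_\pm^* d_\pm) = \ker \Pi_{\mathcal{D}_1^\pm} = \mathcal{D}_1^\mp \oplus \mathcal{D}_0 \oplus \mathcal{D}^\perp$ by Lemma \ref{lemmaDpm} and Proposition \ref{prop42910:04}. For the range: $\mathrm{Ran}(d_\pm) \subset \ker(T^2-1)^\perp$ by the factorization, while (iii) gives $\mathrm{Ran}(d_\pm) \supset \mathrm{Ran}(d_\pm d_\pm^*) = \ker(T^2-1)^\perp$, so equality holds. For (ii), the factorization of $d_\pm^*$ and the fact that $d_{\pm,1}^* = d_\pm^\dagger$ maps $\ker(T^2-1)^\perp$ isometrically onto $\mathcal{D}_1^\pm$ yield $d_\pm^*\ker(T^2-1)^\perp = \mathcal{D}_1^\pm$. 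The proof is essentially bookkeeping, so I do not expect a substantive obstacle; the only mild care needed is keeping straight the domain and codomain of each operator and the compatibility between the projection $\tilde{\Pi}_{\mathcal{D}_1^\pm}$ on $\mathcal{D}_1$ and the projection $\Pi_{\mathcal{D}_1^\pm}$ on $\mathcal{H}$.
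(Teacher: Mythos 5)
Your proposal is correct and follows exactly the route the paper intends: the paper offers no written proof, stating only that the proposition follows ``from \eqref{eq42912:00} and Proposition \ref{prop4231708},'' and your argument supplies precisely that bookkeeping (the factorizations $d_\pm = d_{\pm,1}\Pi_{\mathcal{D}_1^\pm}$, $d_\pm^* = d_{\pm,1}^*\Pi_{\ker(T^2-1)^\perp}$ combined with Proposition \ref{prop4231708} and Lemma \ref{lemmaDpm}). No gaps; the order (iii), (iv), then (i), (ii) is a sensible way to organize it.
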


\subsection{Generator of $U$}
\label{sec.3.2}
By Proposition \ref{prop0409}, the evolution $U=S(2d_A^*d_A-1)$
associated with $d_A$ and $S$ is decomposed as
\begin{equation}
\label{eqU429} 
U = U_{\mathcal{D}_1} \oplus I_{\ker (U-1)} \oplus (-I_{\ker(U+1)}),
\end{equation}
where $\mathcal{D}_1 = \ker(U^2 -1)^\perp$ 
and $\ker(U \mp 1) = \mathcal{D}_0^\pm \oplus \mathcal{D}_\pm^\perp$. 
We first prove the following representation of $U_{\mathcal{D}_1}$:
\begin{theorem}
\label{Mainthm04}
{\rm
Let $U$ be as above. 
$U$ leaves $\mathcal{D}_1^\pm$ invariant, and 
$U_{{\mathcal{D}_1}}$ is decomposed as
\[ U_{\mathcal{D}_1}
	 	= e^{i \vartheta(d_+^* Td_+)} \oplus e^{-i \vartheta(d_-^* Td_-)} 
	 		\quad \mbox{on $\mathcal{D}_1 
	 			= \mathcal{D}_1^+ \oplus \mathcal{D}_1^-$}. \]
}
\end{theorem}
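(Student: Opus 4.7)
The plan is to reduce the theorem to a direct computation of $Ud_\pm^*f$ for $f\in\ker(T^2-1)^\perp$, and then invoke functional calculus via the unitary equivalence furnished by $d_\pm$.

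First, I would exploit the basic algebra of the boundary and shift operators to compute the action of $U$ on the range of $d_A^*$ and $d_B^*$. Using $d_Ad_A^*=I_{\mathcal{K}}$, $d_B=d_AS$, and $S^2=I$ (so $Sd_B^*=d_A^*$), one obtains for any vector $g$ in the appropriate domain
\begin{align*}
U d_A^* g &= S(2d_A^*d_A-1)d_A^*g = Sd_A^*g = d_B^*g,\\
U d_B^* g &= S(2d_A^*d_A-1)d_B^*g = 2d_B^*Tg - d_A^*g,
\end{align*}
since $d_Ad_B^* = T$. These two identities are the workhorse of the argument.

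Next, take $\psi\in\mathcal{D}_1^+$; by Proposition \ref{prop42914:25}(ii) write $\psi=d_+^*f$ with $f\in\ker(T^2-1)^\perp$, and set $g=f/\sqrt{2(1-T^2)}$ so that $\psi=d_A^*g-d_B^*e^{i\vartheta(T)}g$. Applying the two computations above gives
\[
U\psi = d_A^*e^{i\vartheta(T)}g + d_B^*\bigl(1-2Te^{i\vartheta(T)}\bigr)g.
\]
The crucial algebraic identity is the double-angle relation $1-2Te^{i\vartheta(T)}=-e^{2i\vartheta(T)}$, which follows immediately from $T=\cos\vartheta(T)$ and $\sqrt{1-T^2}=\sin\vartheta(T)$. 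Factoring out $e^{i\vartheta(T)}$, this yields
\[
U d_+^*f = \bigl(d_A^*-d_B^*e^{i\vartheta(T)}\bigr) e^{i\vartheta(T)}g = d_+^*\bigl(e^{i\vartheta(T)}f\bigr).
\]
A parallel computation, starting from $d_-^*f=d_A^*e^{i\vartheta(T)}g-d_B^*g$ and using $e^{i\vartheta(T)}-2T=-e^{-i\vartheta(T)}$, produces $U d_-^*f = d_-^*\bigl(e^{-i\vartheta(T)}f\bigr)$. Since $e^{\pm i\vartheta(T)}$ preserves $\ker(T^2-1)^\perp$ (it commutes with $T$ via functional calculus), these identities show that $U$ leaves $\mathcal{D}_1^\pm$ invariant.

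Finally, to identify $U|_{\mathcal{D}_1^\pm}$ with $e^{\pm i\vartheta(d_\pm^*Td_\pm)}$, I would invoke Proposition \ref{prop42914:25}(iii)(iv): the restriction $d_\pm\colon\mathcal{D}_1^\pm\to\ker(T^2-1)^\perp$ is unitary with inverse $d_\pm^*$. Under this unitary equivalence, the self-adjoint operator $T$ on $\ker(T^2-1)^\perp$ is conjugate to $d_\pm^*Td_\pm$ on $\mathcal{D}_1^\pm$, so Borel functional calculus gives $d_\pm^*\varphi(T)d_\pm=\varphi(d_\pm^*Td_\pm)$ on $\mathcal{D}_1^\pm$ for any bounded Borel $\varphi$. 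Applying this with $\varphi=e^{\pm i\vartheta(\cdot)}$ and combining with the displayed intertwining relation $U d_\pm^*=d_\pm^* e^{\pm i\vartheta(T)}$ (together with $d_\pm^*d_\pm=I_{\mathcal{D}_1^\pm}$) yields the desired formula. The only delicate step is the functional-calculus translation; the rest is a routine but careful operator-algebra calculation, and the main pitfall is making sure the identity $1-2Te^{i\vartheta(T)}=-e^{2i\vartheta(T)}$ is deployed to collapse what looks like a messy mixture of $d_A^*$ and $d_B^*$ contributions back into the range of $d_+^*$.
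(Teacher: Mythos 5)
Your proposal is correct and follows essentially the same route as the paper: the identities $Ud_A^*=d_B^*$ and $Ud_B^*=2d_B^*T-d_A^*$, the trigonometric collapse $1-2Te^{i\vartheta(T)}=-e^{2i\vartheta(T)}$ (equivalently the paper's $e^{-i\vartheta(T)}-2T=-e^{i\vartheta(T)}$), and the final functional-calculus step via the unitarity of $d_\pm\colon\mathcal{D}_1^\pm\to\ker(T^2-1)^\perp$ all match the paper's proof. The one point the paper treats more carefully is that $g=f/\sqrt{2(1-T^2)}$ need not exist for a general $f\in\ker(T^2-1)^\perp$, so the computation is first performed for $f$ in the dense subspace $\mathrm{Ran}(T^2-1)$ and then extended by continuity of the bounded operators involved; your argument should be read with that limiting step made explicit.
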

\begin{proof}
{\rm
Let  $\psi \in {\mathcal{D}_1}$.
Because by Proposition \ref{prop42914:25}, 
$d_\pm \psi \in \ker(T^2-1)^\perp$,
we know that
there exists  a sequence $\{F_n^\pm\} \subset {\rm Ran}(T^2-1)$ 
such that $d_\pm \psi = \lim_n F_n^\pm$.
Hence,
\begin{align*}
U(\Pi_{\mathcal{D}_1^+}\psi) & = \lim_n Ud_+^* F_n^+ \\
	& = \lim_n 
		 U (d_A^* - d_B^* e^{i\vartheta(T)}) \frac{1}{\sqrt{2(1-T^2)}} F_n^+ \\
	& =  \lim_n 
		 \left(d_A^*
		 	+ d_B^* (e^{-i\vartheta(T)}-2T)\right) \frac{1}{\sqrt{2(1-T^2)}} 
		 		 e^{i\vartheta(T)}F_n^+,
\end{align*}
where we have used the facts that
$U d_A^* = d_B^*$ and $Ud_B^* = 2d_B^* T-d_A^*$.
Because $e^{-i\vartheta(T)}-2T = - e^{i\vartheta(T)}$,
it follows that
\begin{equation}
\label{eq429(1)} 
U(\Pi_{\mathcal{D}_1^+}\psi) = \lim_n d_+^* e^{i\vartheta(T)} F_n^+ 
	= d_+^* e^{i\vartheta(T)}d_+ \psi \in \mathcal{D}_1^+, 
\end{equation}
which proves that $U$ leaves $\mathcal{D}_1^+$ invariant.
Similarly, using $e^{i\vartheta(T)}-2T = - e^{-i\vartheta(T)}$
yields the result that
\begin{align}
U(\Pi_{\mathcal{D}_1^-}\psi) & = \lim_n Ud_-* F_n^- \notag \\
	& = \lim_n 
		 U (d_A^*e^{i\vartheta(T)} - d_B^* ) \frac{1}{\sqrt{2(1-T^2)}} F_n^- \notag \\
	& =  \lim_n 
		 \left(d_A^* e^{i\vartheta(T)} 
		 	+ d_B^*(e^{i\vartheta(T)}-2T) e^{i\vartheta(T)}\right) 
		 		\frac{1}{\sqrt{2(1-T^2)}} 
		 		 e^{-i\vartheta(T)}F_n^+ \notag \\
	& =  d_-^* e^{-i\vartheta(T)}d_- \psi \in \mathcal{D}_1^-.
		\label{eq429(2)} 
\end{align}
Hence, the former half of the theorem follows.
By \eqref{eq429(1)} and \eqref{eq429(2)},
it follows that for all $\psi \in \mathcal{D}_1$,
\begin{align*}
U \psi 
	& = U (\Pi_{\mathcal{D}_1^+}\psi) + U(\Pi_{\mathcal{D}_1^-}\psi) \\
	&  = d_+^* e^{i\vartheta(T)}d_+ \psi 
		+ d_-^* e^{-i\vartheta(T)}d_- \psi.
\end{align*}
Because by Proposition \ref{prop42914:25}, 
$d_\pm:\mathcal{D}_1^\pm\to \ker(T^2-1)^\perp$ is unitary,
\[ d_\pm^* e^{\pm i\vartheta(T)}d_\pm =  e^{\pm i\vartheta(d_\pm^*Td_\pm)}, \]  
which completes the proof.
}
\end{proof}
\begin{proof}[Proof of Theorem \ref{mainthm01}]
Let $H$ be defined by \eqref{eqHform}. 
\begin{align*}
e^{iH} 
& = e^{i \vartheta(d_+^*Td_+)} \oplus e^{i (2\pi - \vartheta(d_-^*Td_-))}
			\oplus e^{0} \oplus e^{i\pi} \\
& = e^{i \vartheta(d_+^*Td_+)} \oplus e^{- i \vartheta(d_-^*Td_-)}
			\oplus 1 \oplus (-1)
\end{align*}
on 
$\mathcal{H}
	= \mathcal{D}_1^+ \oplus \mathcal{D}_1^- 
		\oplus \ker (U-1) \oplus \ker(U+1)$. 
By \eqref{eqU429}, $e^{iH} = U$.
Because {$E_H([0,2\pi)) = I$},
we obtain the desired result.
\end{proof}
\begin{proof}[Proof of Corollary \ref{mainthm01.5}]
Because $\psi_0 \in {\rm Ran}(d_+^*d_+) = \mathcal{D}_1^+$,
\[ f_n = d_+U^n \psi_0 = e^{in\vartheta(T)} d_+ \psi_0. \]
Hence,
\[ \frac{1}{2}(f_n + f_{n-1})
	= \frac{e^{i\vartheta(T)} + e^{-i\vartheta(T)}}{2} e^{in\vartheta(T)} d_+ \psi_0 
	= T f_n. \]
\end{proof}
\begin{proof}[Proof of Corollary \ref{mainthm02}]
Let $T_1 = T\Pi_{\ker(T^2-1)^\perp}$. 
Because $H$ has of the form \eqref{eqHform},
it follows that
\[ \sigma_{\rm p}(H)
	= \{ \vartheta_+(\lambda) \mid \lambda \in \sigma_{\rm p}(T_1) \} 
		\cup 
	\{ \vartheta_-(\lambda) \mid \lambda \in \sigma_{\rm p}(T_1) \}
		\cup  \{0, \pi\}. \]
Here, we set $\vartheta_+=\vartheta$ and $\vartheta_-=2\pi-\vartheta$.
It is clear that $\ker(H) = \ker(U-1)$ and $\ker(H- \pi) = \ker(U+1)$.
Because $d_\pm:\mathcal{D}_1^\pm \to \ker(T^2-1)^\perp$ are unitary,
\[ \ker(H-\vartheta_\pm(\lambda))
	= d_\pm^* \ker(T- \lambda). \]
Hence,
\begin{align*} 
\mathcal{H}_{\rm p}(H)
	& = \left[\bigoplus_{\lambda \in \sigma_{\rm p} (T_1)} 
		d_+^* \ker(T- \lambda) \right]
	\oplus \left[\bigoplus_{\lambda \in \sigma_{\rm p} (T_1)} 
		d_-^* \ker(T- \lambda) \right] 
	\oplus \ker (U^2-1) \\
	& = d_+^* \mathcal{H}_{\rm p}(T_1) \oplus d_-^* \mathcal{H}_{\rm p}(T_1)
		\oplus \ker (U^2-1).
\end{align*}
Because $\mathcal{H}_{\rm p}(T_1) = \mathcal{H}_{\rm p}^T$,
we obtain the former statement of the corollary.
The latter follows from  
$\mathcal{H}_{\rm p}(T)^\perp
	= \mathcal{H}_{\rm c}(T) 
	=\mathcal{H}_{\rm ac}(T) \oplus \mathcal{H}_{\rm sc}(T) $
and the unitarity of $d_\pm$. 
\end{proof}

\section{Conclusion}
In this paper,
we gave the explicit formula of the generator $H$
of the abstract Szegedy evolution operator $U = S(2d_A^*d_A-1)$
in terms of the discriminant operator $T = d_A S d_A^*$.
Using this formula,
we characterized the spectral properties of $H$.
By the discrete analog of the RAGE theorem,
we also characterized the asymptotic properties of a quantum walker
in terms of the generator $H$.
In the case of the abstract Szegedy walk,
we obtained the criteria for localization in terms of $T$ and the subspace
$\mathcal{D}^\perp$.
In particular, for the Grover walk on a symmetric graph $G$,
this implies that
localization occurs 
for some initial state $\Psi_0$
only when the transition operator $P_G$ has 
an eigenvalue or $\mathcal{D}_\pm^\perp \not=\emptyset$.
In our future work,
we will apply the theory developed in this paper
to an inhomogeneous QW on $\mathbb{Z}$ such as \cite{K,ShK}.

We also gave the axiom of the abstract discrete-time QWs,
which includes many QWs. 
Given a unitary operator $U$ on a Hilbert space $\mathcal{H}$
and a decomposition $\mathcal{H} = \bigoplus_{x \in V}\mathcal{H}_x$,
we can naturally define a directed graph $G_U$ with vertices $V$
and the finding probability of a quantum walker moving on $G_U$. 
In forthcoming papers,  
we will treat the following problems:
\begin{itemize}
\item[(1)] What kind of unitary operator $U$ has a boundary operator
$d_A$ and a shift operator $S$ such that $U=S(2d_A^*d_A-1)$?
\item[(2)] What is the graph $G_U$? 
\end{itemize}


\vspace{2mm}
\noindent
{\bf Acknowledgements} \quad
The authors thank H. Ohno and Y. Matsuzawa 
for their useful comments.
ES and AS also acknowledge financial supports of the Grant-in-Aid for Young Scientists (B) of Japan Society for the Promotion of Science (Grants No. 25800088 and No. 26800054, respectively). 
ES is also supported by the Japan-Korea Basic
Scientific Cooperation Program ``Non-commutative Stochastic Analysis:
New Prospects of Quantum White Noise and Quantum Walk" (2015-2016).

\appendix
\section{Appendix}

\subsection{Proof of Proposition \ref{proplocal}}
\label{sec.a.1}
We present a proof of Proposition \ref{proplocal}.
Let $H$ be the generator of  an evolution 
$(U, \{\mathcal{H}_v \}_{v \in V}) \in \mathscr{F}_{\rm QW}$. 
Throughout this subsection,  
we assume that ${\rm dim}\mathcal{H}_v < \infty$ ($v \in V$).
Let $\mathcal{H}_1$ be the set of vectors $\Psi_0 \in \mathcal{H}$ satisfying
\[  \lim_{N \to \infty} \frac{1}{N}\sum_{n=0}^{N-1} \nu_n^{\Psi_0}(R) = 0 \]
for any finite subset $R$ of $V$,
and $\mathcal{H}_2$ the set of vectors  $\Psi_0 \in \mathcal{H}$ satisfying
\[ \lim_{m \to \infty} \sup_{n} \nu_n^{\Psi_0}(R_m^{\rm c}) = 0 \]
for any sequence $\{R_m\}$ of finite subsets of $V$  such that 
$R_m \subset R_{m+1}$  and $V=\cup_m R_m$.
Because
$\nu_n^{\alpha \Psi_0 + \beta \Phi_0}(R) 
	\leq 2 \left( |\alpha|^2 \nu_n^{\Psi_0}(R) 
		+ |\beta|^2 \nu_n^{\Phi_0}(R) \right)$, 
we know that $\mathcal{H}_1$ and $\mathcal{H}_2$ are subspaces of $\mathcal{H}$.
Let $P_R = \sum_{x \in R} P_x$ ($R \subset V$).
Then,
\[ \nu_n^{\Psi_0}(R) =  \left\|P_R e^{in H} \Psi_0 \right\|^2. \]
\begin{lemma}
\label{lem42922:34}
{\rm
$\mathcal{H}_1 \perp \mathcal{H}_2$.
}
\end{lemma}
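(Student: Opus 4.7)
The plan is to show $\langle \Psi_1, \Psi_2\rangle = 0$ for arbitrary $\Psi_1 \in \mathcal{H}_1$ and $\Psi_2 \in \mathcal{H}_2$ by exploiting unitarity together with the resolution $I = P_R + P_{R^{\rm c}}$ for a finite set $R \subset V$. Since $U$ is unitary, $\langle \Psi_1, \Psi_2\rangle = \langle U^n \Psi_1, U^n \Psi_2\rangle$ for every $n \in \mathbb{N}$, so the idea is to bound each side using the ``localized'' information provided by the two hypotheses and then send parameters to infinity.

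First, I would fix a finite $R \subset V$, split $U^n \Psi_2 = P_R U^n \Psi_2 + P_{R^{\rm c}} U^n \Psi_2$, and apply Cauchy--Schwarz to obtain
\[ |\langle \Psi_1, \Psi_2\rangle| \leq \|P_R U^n \Psi_1\|\,\|\Psi_2\| + \|\Psi_1\|\,\|P_{R^{\rm c}} U^n \Psi_2\| = \sqrt{\nu_n^{\Psi_1}(R)}\,\|\Psi_2\| + \|\Psi_1\|\,\sqrt{\nu_n^{\Psi_2}(R^{\rm c})} \]
for every $n$. Averaging over $n \in \{0, 1, \dots, N-1\}$ and applying Cauchy--Schwarz to the first term yields
\[ |\langle \Psi_1, \Psi_2\rangle| \leq \sqrt{\bar\nu_N^{\Psi_1}(R)}\,\|\Psi_2\| + \|\Psi_1\|\,\sqrt{\sup_{n}\nu_n^{\Psi_2}(R^{\rm c})}. \]

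Finally, I would fix an increasing sequence $\{R_m\}$ of finite subsets of the countable set $V$ with $\bigcup_m R_m = V$, substitute $R = R_m$, and let $N \to \infty$. Since $\Psi_1 \in \mathcal{H}_1$ and $R_m$ is finite, $\bar\nu_N^{\Psi_1}(R_m) \to 0$, leaving $|\langle \Psi_1, \Psi_2\rangle| \leq \|\Psi_1\|\sqrt{\sup_n \nu_n^{\Psi_2}(R_m^{\rm c})}$. Then letting $m \to \infty$ and using $\Psi_2 \in \mathcal{H}_2$ forces the right side to zero, so $\langle \Psi_1, \Psi_2\rangle = 0$, as desired.

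There is no serious obstacle here; the only subtle point is to order the limits correctly (first $N \to \infty$ for fixed $m$, then $m \to \infty$), which is handled by noting that the bound separates cleanly into one piece depending only on $(N,R)$ and another depending only on $R$. The countability of $V$ guarantees the existence of the required exhausting sequence $\{R_m\}$, so no extra hypothesis is needed.
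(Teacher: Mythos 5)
Your proof is correct and follows essentially the same route as the paper: unitarity of $U$, the splitting $I = P_R + P_{R^{\rm c}}$, Cauchy--Schwarz on the time average to produce $\bar\nu_N^{\Psi_1}(R)^{1/2}$, and a $\sup_n$ bound on the complementary term, followed by the limits $N \to \infty$ then $m \to \infty$ in the same order. No gaps.
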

\begin{proof}
Let $\Psi_0 \in \mathcal{H}_1$ and $\Phi_0 \in \mathcal{H}_2$.
Then, for all $R \subset V$,
\begin{align*}
|\langle \Psi_0, \Phi_0 \rangle| 
& = \frac{1}{N} \sum_{n=0}^{N-1} |\langle \Psi_n, \Phi_n \rangle| \\
& \leq \frac{1}{N} \sum_{n=0}^{N-1} 
		 |\langle P_R \Psi_n, P_R \Phi_n \rangle|
	+ \frac{1}{N} \sum_{n=0}^{N-1}
		 |\langle P_{R^{\rm c}} \Psi_n, P_{R^{\rm c}}  \Phi_n \rangle| \\
& \leq \|\Phi_0\| \left( \frac{1}{N} \sum_{n=0}^{N-1}   \|P_R \Psi_n\| \right)
		+  \|\Psi_0\| \left( \frac{1}{N} \sum_{n=0}^{N-1}  \|P_{R^{\rm c}}  \Phi_n\| \right).
\end{align*}
We first estimate the first term.
By the Cauchy-Schwarz inequality, 
\begin{align*}
\frac{1}{N} \sum_{n=0}^{N-1}   \|P_R \Psi_n\|
	\leq \left(\frac{1}{N} \sum_{n=0}^{N-1}   \|P_R \Psi_n\|^2 \right)^{1/2}
	= \bar \nu_N^{\Psi_0}(R)^{1/2}.
\end{align*}
The second term is estimated as follows:
\[ \frac{1}{N} \sum_{n=0}^{N-1}   \|P_{R^{\rm c}} \Phi_n\|
	\leq \sup_{n \geq 0} \|P_{R^{\rm c}} \Phi_n\| 
		= \sup_{n \geq 0} \nu_n^{\Phi_0}(R^{\rm c})^{1/2}. \]
Combining these inequalities yields the result that
\begin{equation}
\label{eq42921:57} 
|\langle \Psi_0, \Phi_0 \rangle |
	\leq \|\Phi_0\|  \bar\nu_N^{\Psi_0}(R)^{1/2}
		+  \|\Psi_0\| \sup_{n \geq 0} \nu_n^{\Phi_0}(R^{\rm c})^{1/2}. 
\end{equation}
Let $\epsilon >0$ and $\{R_m\}_{m\geq 1}$ be a family of finite subsets of $V$
such that $R_m \subset R_{m+1}$ and $V = \cup_{m \geq 1} R_m$.
Because $\Phi_0 \in \mathcal{H}_2$, 
there exists an $m_0 \in \mathbb{N}$ such that 
$\nu_n^{\Phi_0}(R_m^{\rm c}) < \epsilon^2/\|\Psi_0\|^2$ ($m \geq m_0$).
Because $\Psi_0 \in \mathcal{H}_1$,
it follows from \eqref{eq42921:57} that
\begin{align*}
\lim_{N \to \infty} |\langle \Psi_0, \Phi_0 \rangle |
	\leq \epsilon,
\end{align*}
which completes the proof.
\end{proof}

\begin{lemma}
\label{lem42922:35}
{\rm
\begin{itemize}
\item[(i)] $\mathcal{H}_{\rm c}(H) \subset \mathcal{H}_1$;
\item[(ii)] $\mathcal{H}_{\rm p}(H) \subset \mathcal{H}_2$.
\end{itemize}
}
\end{lemma}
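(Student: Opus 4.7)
The plan is to treat the two inclusions separately, exploiting the hypothesis $\dim \mathcal{H}_v < \infty$, which ensures that for any finite $R \subset V$ the projection $P_R = \sum_{x\in R} P_x$ is of finite rank.

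For (i), I would invoke the discrete Wiener theorem: for any $\phi,\psi \in \mathcal{H}$,
\[ \lim_{N\to\infty} \frac{1}{N}\sum_{n=0}^{N-1} |\langle \phi, e^{inH}\psi\rangle|^2
   = \sum_{\theta \in [0,2\pi)} |\langle \phi, E_H(\{\theta\})\psi\rangle|^2, \]
so that $\psi \in \mathcal{H}_{\rm c}(H)$ forces the right-hand side to vanish for every choice of $\phi$. Since $P_R$ is of finite rank, I would expand $P_R = \sum_{k=1}^{K} |\phi_k\rangle\langle\phi_k|$ in an orthonormal basis of $P_R\mathcal{H}$ and write
\[ \bar\nu_N^{\Psi_0}(R) = \sum_{k=1}^{K} \frac{1}{N}\sum_{n=0}^{N-1}|\langle\phi_k, e^{inH}\Psi_0\rangle|^2. \]
Because the outer sum has only finitely many terms, I may take $N\to\infty$ inside the sum and apply Wiener to each summand, obtaining $\bar\nu_\infty^{\Psi_0}(R) = 0$ for every finite $R$, i.e.\ $\Psi_0 \in \mathcal{H}_1$.

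For (ii), I would use the density, inside $\mathcal{H}_{\rm p}(H)$, of finite linear combinations of eigenvectors of $H$. Given $\Psi_0 \in \mathcal{H}_{\rm p}(H)$ and $\epsilon > 0$, I decompose $\Psi_0 = \sum_{j=1}^{J} c_j \phi_j + \eta$ with $H\phi_j = \lambda_j \phi_j$ and $\|\eta\|<\epsilon$. The crucial point is that $e^{inH}\phi_j = e^{in\lambda_j}\phi_j$, hence
\[ \sup_{n\geq 0} \|P_{R_m^{\rm c}} e^{inH}\phi_j\| = \|P_{R_m^{\rm c}}\phi_j\|, \]
and this quantity tends to $0$ as $m\to\infty$ because $\sum_x \|P_x\phi_j\|^2 = \|\phi_j\|^2$ is a convergent series and $V = \bigcup_m R_m$ is an exhausting cover. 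A triangle inequality combined with the unitary invariance $\|e^{inH}\eta\| = \|\eta\|$ then yields
\[ \sup_n \nu_n^{\Psi_0}(R_m^{\rm c})^{1/2} \leq \sum_{j=1}^{J} |c_j|\,\|P_{R_m^{\rm c}}\phi_j\| + \|\eta\|, \]
which is bounded by $2\epsilon$ for $m$ sufficiently large, concluding $\Psi_0 \in \mathcal{H}_2$.

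The main (mild) obstacle is in (i), where one must cite the discrete Wiener theorem in precisely the form needed; the finite-rank structure of $P_R$ is essential because it lets me exchange a sum with the Cesaro limit without appealing to dominated convergence in $k$. In (ii) the only subtlety is uniformity in $n$, which is manifest on each eigenvector but has to be transferred to $\Psi_0$ via the eigenvector approximation plus the triangle inequality; the finite-dimensionality of $\mathcal{H}_v$ is not even needed there.
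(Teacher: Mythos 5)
Your proposal is correct and follows essentially the same route as the paper: part (i) reduces $\bar\nu_N^{\Psi_0}(R)$ to finitely many rank-one terms via $\dim\mathcal{H}_v<\infty$ and applies Wiener's theorem to each, and part (ii) approximates $\Psi_0$ by a finite combination of eigenvectors and uses the triangle inequality together with $\|P_{R_m^{\rm c}}\phi_j\|\to 0$. The only cosmetic difference is that you cite the discrete Wiener theorem as a black box, whereas the paper reproves it inline (Fubini, polarization, and dominated convergence using $|g_N|\le 1$ and $g_N\to 0$ off the diagonal); your observation that finite-dimensionality is not needed for (ii) is also consistent with the paper's argument.
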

\begin{proof}
Let $\Psi_0 \in \mathcal{H}_{\rm c}(H)$.
For any finite set $R$,
\begin{equation}
\label{eq050215:07}
\bar\nu_N^{\Psi_0}(R) 
	= \sum_{x \in R} \sum_{j =1}^{{\rm dim}\mathcal{H}_x}  \bar\nu_N (\phi_{x,j}), 
\end{equation}
where $\{\phi_{x,j}\}$ is a complete orthonormal system of $\mathcal{H}_x$
and 
$\bar\nu_N (\phi)$
	$:= \frac{1}{N} \sum_{n=0}^{N-1}$ $|\langle \phi, e^{inH}\Psi_0 \rangle|^2$.
Because, by assumption, 
the sum in \eqref{eq050215:07} runs over a finite set,
it suffices to show that $\lim_{N \to \infty} \bar\nu_N (\phi) = 0$.
Let $\omega(x) = e^{in x}$ and 
{
$g_N(\omega) = \frac{1}{N}\sum_{n=0}^{N-1} \omega^n$.
Then, $g_N(\omega) = \frac{1-\omega^N}{N(1-\omega)}$ if $\omega \not=1$
and $g_N(1) = 1$.}
By the Fubini theorem,
{
\[ \bar\nu_N(\phi) 
	= \int_0^{2\pi} \int_0^{2\pi} g_N(\omega(\lambda -\mu))
		d\langle P_{\rm c}(H)\phi, E_H(\lambda) \Psi_0 \rangle
			d\langle \Psi_0, E_H(\mu) P_{\rm c}(H) \phi \rangle, \] 
where $P_{\rm c}(H)$ is the projection onto $\mathcal{H}_{\rm c}(H)$.
By the polarization identity,
there exists $\{\psi_j\}_{j=1,2,3,4} \subset \mathcal{H}_{\rm c}(H)$ such that
\[ \bar\nu_N(\phi) 
	\leq {\rm const.} \sum_{j,k=1,2,3,4} \int_0^{2\pi} \int_0^{2\pi} 
		|g_N(\omega(\lambda-\mu))| 
			d\|E_H(\lambda) \psi_j\|^2 d\|E_H(\mu) \psi_k\|^2. \]}
Because $F_j := \|E_H(\cdot) \psi_j\|^2$ is continuous, 
\begin{align*} 
\int\int_{\{(\lambda, \mu) \mid \lambda=\mu\}}
		dF_j(\lambda) dF_k(\mu)
	& \leq \int_0^{2\pi}  dF_k(\mu) 
		\int_{\mu  -\epsilon}^{\mu + \epsilon}
		dF_j(\lambda) \\
	& =  \int_0^{2\pi} dF_k(\mu)  (F_j(\mu + \epsilon) - F_j(\mu - \epsilon)) 
		\to 0,
\end{align*}		
as $\epsilon \to 0$. 
Because {$\sup_{|\omega|=1} |g_N(\omega)| \leq 1$}	
and $\lim_{N \to \infty} g_N(\omega(\lambda-\mu)) = 0$ ($\lambda \not=\mu$),
we obtain $\lim_{N\to 0} \bar\nu_N(\phi) =0$ by the dominated convergence theorem.
This completes the proof of (i).

Let $\Psi_0 \in \mathcal{H}_{\rm p}(H)$.
For any $\epsilon > 0$, there exist eigenvectors 
$\{ \phi_j\}_{j=1}^M$ ($M \in \mathbb{N}$) of $H$ such that 
$\|\Psi_0 - \sum_{j=1}^M \langle \phi_j, \Psi_0 \rangle \phi_j \| < \epsilon$.
Let $\{R_m\}$ be a sequence of finite subsets of $V$
such that $R_m \subset R_{m+1}$ and $\cup_m R_m = V$.
It follows that
\[ \nu_n^{\Psi_0}(R_m^{\rm c})^{1/2}
	\leq \sum_{j=1}^M 
		|\langle \phi_j, \Psi_0 \rangle | \| P_{R_m^{\rm c}} \phi_j \|
			+ \epsilon, \]
which proves $\lim_{m \to \infty} \sup_n \nu_n^{\Psi_0}(R_m^{\rm c}) = 0$.
Hence we have (ii).
\end{proof}

\begin{proof}[Proof of Proposition \ref{proplocal}]
Combining Lemmas \ref{lem42922:34} and \ref{lem42922:35}
yields the result that
\begin{align*} 
\mathcal{H}_2 \subset \mathcal{H}_1^\perp
	\subset \mathcal{H}_{\rm p}(H) \subset  \mathcal{H}_2, \quad
\mathcal{H}_1 \subset \mathcal{H}_2^\perp
	\subset \mathcal{H}_{\rm c}(H) \subset  \mathcal{H}_1,
\end{align*}
which proves the proposition.
\end{proof}

\subsection{Proof of Equation \eqref{eqloc}}
\label{subsecProfeqloc}
In this subsection, we prove the following:
\begin{lemma}
{\rm Let $(U, \{\mathcal{H}_v\}_{v \in V}) \in \mathscr{F}_{QW}$
and $\Psi_0 \in \mathcal{H}$ satisfy 
\[ \lim_{m \to \infty} \sup_n \nu_n^{\Psi_0}(R_m^{\rm c}) = 0 \]
for an increasing sequence $\{R_m \}$ of finite subsets of $V$.
Then, \eqref{eqloc} holds.
In particular, \eqref{eqloc} holds for all $\Psi_0 \in \mathcal{H}_{\rm p}(H)$.
}
\end{lemma}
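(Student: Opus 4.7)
The plan is to exploit the conservation of total probability together with the finiteness of each $R_m$. Since $U$ is unitary and $\|\Psi_0\|=1$, we have $\nu_n^{\Psi_0}(V) = \|\Psi_n\|^2 = 1$ for every $n \in \mathbb{N}$. By the hypothesis, I can choose $m_0$ large enough that $\sup_n \nu_n^{\Psi_0}(R_{m_0}^{\rm c}) < 1/2$, and consequently
\[ \nu_n^{\Psi_0}(R_{m_0}) = 1 - \nu_n^{\Psi_0}(R_{m_0}^{\rm c}) > \frac{1}{2} \quad \text{for all } n \in \mathbb{N}. \]
Thus a fixed finite set $R_{m_0}$ captures at least half of the probability mass at every time step.

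Next I would argue by contradiction. Suppose \eqref{eqloc} fails, i.e.\ $\limsup_{n \to \infty} \nu_n^{\Psi_0}(x) = 0$ for every $x \in V$. Then in particular $\lim_{n \to \infty} \nu_n^{\Psi_0}(x) = 0$ for each $x \in R_{m_0}$. Since $R_{m_0}$ is a \emph{finite} set, the sum over $x \in R_{m_0}$ is a finite sum of sequences tending to zero, so
\[ \lim_{n \to \infty} \nu_n^{\Psi_0}(R_{m_0})
	= \sum_{x \in R_{m_0}} \lim_{n \to \infty} \nu_n^{\Psi_0}(x) = 0, \]
contradicting the lower bound $\nu_n^{\Psi_0}(R_{m_0}) > 1/2$ obtained above. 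Hence there must exist some $x \in R_{m_0} \subset V$ with $\limsup_{n \to \infty} \nu_n^{\Psi_0}(x) > 0$, which is \eqref{eqloc}.

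For the ``in particular'' clause, I would simply invoke Proposition \ref{proplocal} (ii): if $\Psi_0 \in \mathcal{H}_{\rm p}(H)$, then for any increasing sequence $\{R_m\}$ of finite sets with $\bigcup_m R_m = V$ we have $\lim_{m \to \infty} \sup_n \nu_n^{\Psi_0}(R_m^{\rm c}) = 0$, so the first part of the lemma applies and yields \eqref{eqloc}. No real obstacle arises here; the only subtlety worth flagging is the use of unitarity to keep $\nu_n^{\Psi_0}(V) = 1$ at every time, which is what converts the uniform tightness assumption into a uniform lower bound on a finite subset.
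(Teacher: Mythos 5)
Your proof is correct and follows essentially the same route as the paper: pick $m_0$ so that the tail probability is uniformly small, use $\nu_n^{\Psi_0}(V)=1$ to get a uniform lower bound on $\nu_n^{\Psi_0}(R_{m_0})$, and derive a contradiction from the finiteness of $R_{m_0}$ if every site had vanishing $\limsup$. The ``in particular'' clause is handled the same way, via the inclusion $\mathcal{H}_{\rm p}(H)\subset\mathcal{H}_2$ established in the appendix.
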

\begin{proof}
By assumption,
we know that for any $\epsilon > 0$,
there exists $m_0 \in \mathbb{N}$ such that
$\sup_n \nu_n^{\Psi_0}(R_{m_0}^{\rm c}) < \epsilon$.
Hence,
\begin{equation}
\label{eq2317} 
\limsup_{n \to \infty} \nu_n^{\Psi_0} (R_{m_0}) \geq  1-  \epsilon. 
\end{equation}
If $\limsup_n \nu_n^{\Psi_0}(x) = 0$ for any $x \in R_{m_0}$,
then 
\[ \limsup_n \nu_n^{\Psi_0}(R_{m_0})
	=  \sum_{x \in R_{m_0}} \limsup_n \nu_n^{\Psi_0}(x) = 0, \]
which contradicts \eqref{eq2317}.
Therefore, \eqref{eqloc} holds for some $x \in R_{m_0}$.
\end{proof}


\begin{thebibliography}{999}
\bibitem
{Ah} 
Aharonov, L. Davidovich, L., Zagury, N.: 
	Quantum random walks, 
	{Phy. Rev. A} {\bf 48}, 1687--1690 (1993)
\bibitem
{Am01} 
Ambainis, A.,  Bach, E.,  Nayak, A.,  Vishwanath, A.,  Watrous, J.:
	One-dimensional quantum walks,
	{ACM Symp. Theor. Comupt.},
	37--49 (2001)
\bibitem
{Am03} 
Ambainis, A.:
	Quantum walks and their algorithmic applications,
	{\it Int. J. Quantum Inf.} {\bf 1}, 507--518, (2003)
\bibitem
{AKR} 
	Ambainis, A.,  Kempe, J.,  Rivosh, A.: 
	Coins make quantum walks faster, 
	ACM-SIAM Symp. Discrete Algorithm,
	1099--1108 (2005)
\bibitem{AG}
Amrein, W. O., Georgescu, V.: 
	On the characterization of bound states and scattering states in quantum mechanics,
	Helv. Phys. Acta. {\bf 46}, 635--658 (1973)
\bibitem{CFG}
	Childs, A. M., Farhi, E.,  Gutmann, S.:
	An example of the difference between quantum and classical random walks,
	Quantum Inf. Process. {\bf 1}, 35--43 (2002) 
\bibitem{En}
Enss, V.:
	Asymptotic completeness for quantum mechanical potential scattering I. Short rage potentials,
	Commun. Math. Phys. {\bf 61}, 285--291 (1978) 
\bibitem
{FH} 
Feynman, R.P.,  Hibbs, A.R.: 
	Quantum Mechanics and Path Integrals, 
        {McGraw-Hill, Inc., New York}, 34--36 (1965)
\bibitem
{Gud} 
Gudder, S.: 
	Quantum Probability, 
        {Academic Press Inc., Boston} (1988)
\bibitem
{Gr} 
Grover, L.:
	A fast quantum mechanical algorithm for database search,
	ACM Symp. Theor. Comp.,
	212--219 (1996)
\bibitem
{HKSS13} 
Higuchi, Yu.,  Konno, N.,  Sato, I.,  Segawa, E. :
	Quantum graph walks I: mapping to quantum walks,
	{Yokohama Math. J.} {\bf 59}, 33--55 (2013) 
\bibitem
{HKSS14} 
Higuchi, Yu.,  Konno, N.,  Sato, I.,  Segawa, E. :
	Spectral and asymptotic properties of Grover walks on crystal lattices,
	{J. Funct. Anal.} {\bf 267}, 4197--4235 (2014) 
\bibitem
{SS15} 
Higuchi, Yu.,  Segawa, E.,  Suzuki, A.: 
	Spectral mapping theorem of an abstract quantum walk,
	arXiv:1506.06457.
\bibitem
{K} 
Konno, N.:
	One-dimensional discrete-time quantum walks on random environments, 
	{\it Quantum Inf. Process.} {\bf 8}, 387--399 (2009) 
\bibitem
{Mey} 
Meyer, D:
	From quantum cellular automata to quantum lattice gases, 
	{J. Stat. Phys.}, {\bf 85}, 551--574 (1996) 
\bibitem
{MNRS} 
Magniez, F.,  Nayak, A., Roland, J.,  Santha, M.:
	Search via quantum walk,
	ACM Symp. Theor. Comput.,
	575--584 (2007)
\bibitem
{RS3}
Reed, M.,  Simon, B.:
	{Methods of Modern Mathematical Physics, Vol III}, 
	Academic Press, New York (1979)
\bibitem{Ru}
Ruelle, D.:
	A remark on bound states in potential-scattering theory,
	Nuovo Cimento A {\bf 61}, 655--662 (1969)
\bibitem
{Sz} 
Szegedy, M.: 
	Quantum speed-up of Markov chain based algorithms,
	{Ann. IEEE. Symp. Found.},
	32--41 (2004) 
\bibitem
{Se} 
E. Segawa,
	Localization of quantum walks induced by recurrence properties of random walks,
	{J. Comput. Theor. Nanos.} 
	{\bf 10}, 1583--1590 (2013) 
\bibitem
{SKW} 
Shenvi, N.,  Kempe, J.,  Whaley, K.:
Quantum random-walk search algorithm,
Phys. Rev. A {\bf 67}, 052307 (2003)
\bibitem
{Sh} 
Shikano, Y.:
	From discrete-time quantum walk to continuous-time quantum walk in limit distribution, 
	{J. Comput. Theor. Nanos.} {\bf 10}, 1558--1570 (2013)
\bibitem
{ShK} 
Shikano, Y., Katsura,  H.:
	Localization and fractality in inhomogeneous quantum walks with self-duality,
        {Phys. Rev. E} {\bf 82}, 031122 (2010)  
\bibitem{VA15}
Venegas-Andraca, S. E.:
	Quantum walks: a comprehensive review,
	{Quantum Inf. Process.} {\bf 11}, 1015--1106 (2012).
\bibitem
{Wa} 
Watrous, J.:
	Quantum simulations of classical random walks and undirected graph connectivity,
	{J. Comput. Syst. Sci.} {\bf 62}, 376--391 (2001) 
\bibitem{Wi}
Wiener, N.:
	The Fourier integral and certain of its applications,
	Cambridge Univ. Press, London (1935)
\end{thebibliography}
\end{document}